\title[AAMAS-2021 Formatting Instructions]{The Tight Bound for Pure Price of Anarchy in an Extended Miner's Dilemma Game}
\author{Qian Wang}
\affiliation{
  \institution{Peking University}
  \city{Beijing}}
\email{charlie@pku.edu.cn}
\author{Yurong Chen}
\affiliation{
  \institution{Peking University}
  \city{Beijing}}
\email{chenyur911@pku.edu.cn}
\begin{abstract}

Pool block withholding attack is performed among mining pools in digital cryptocurrencies, such as Bitcoin. Instead of mining honestly, pools can be incentivized to infiltrate their own miners into other pools. These infiltrators report partial solutions but withhold full solutions, share block rewards but make no contribution to block mining. The block withholding attack among mining pools can be modeled as a non-cooperative game called ``the miner's dilemma'', which reduces effective mining power in the system and leads to potential systemic instability in the blockchain. However, existing literature on the game-theoretic properties of this attack only gives a preliminary analysis, e.g., an upper bound of $3$ for the pure price of anarchy ($\mathrm{PPoA}$) in this game, with two pools involved and no miner betraying. Pure price of anarchy is a measurement of how much mining power is wasted in the miner's dilemma game. Further tightening its upper bound will bring us more insight into the structure of this game, so as to design mechanisms to reduce the systemic loss caused by mutual attacks. In this paper, we give a tight bound of $(1,2]$ for the pure price of anarchy. Moreover, we show the tight bound holds in a more general setting, in which infiltrators may betray. We also prove the existence and uniqueness of pure Nash equilibrium in this setting. Inspired by experiments on the game among three mining pools, we conjecture that similar results hold in the N-player miner's dilemma game ($N\geq 2$). 



\end{abstract}
\keywords{Block Withholding Attack, Nash Equilibrium Analysis, Pure Price of Anarchy.}
\newcommand{\BibTeX}{\rm B\kern-.05em{\sc i\kern-.025em b}\kern-.08em\TeX}
\begin{document}


\pagestyle{fancy}
\fancyhead{}


\maketitle 


\section{Introduction}









Bitcoin, assumed to be one of the most successful applications of blockchain, has gained considerable attention since its inception in 2008 \cite{nakamoto2008peer}. One research direction of interest is to study its security and potential attacks. Bitcoin's security mainly comes from two sources: the blockchain data structure and proof-of-work (PoW) consensus protocol. Blockchain is an open, transparent, decentralized digital ledger that can validate transactions between two parties without third-party authentication. Transaction records are stored as blocks linked by hash pointers. PoW is used to decide who gets the power of authorizing the next valid block. In this protocol, miners have to solve a computationally difficult puzzle, and the first miner working out the solution can announce his block and get the block reward. This difficult puzzle is specially designed so that miners spending more mining power have larger probabilities to solve it. Due to fierce competition in Bitcoin system, it may take months, even years, for a single miner to find the solution, thus miners tend to form mining pools to reduce the high variance of mining rewards.


In a mining pool, miners will work on the same puzzle together. Once a miner successfully finds a full solution and submits it, the pool manager is responsible for reporting the block and splitting the reward following certain fair allocation rule. Each miner will get the reward proportional to the mining power he spends. To evaluate how much power miners spend, a pool manager will accept blocks with lower difficulty from miners, called ''share'' or partial solution, as their proof of work, and treat the number of shares handed in as their estimated mining power. 

Such open pools are susceptible to the pool block withholding attack \cite{eyal2015miner}. \citet{eyal2015miner} demonstrated that mining pools have the incentive to infiltrate their own miners into other opponent pools. These infiltrators only submit partial solutions while throw away valid blocks, thus they can share block rewards but make no contribution to block mining. The attack among mining pools can be modeled as a non-cooperative game. \citet{eyal2015miner} referred this game with two players as ``the 2-player miner's dilemma''. In a 2-player miner's dilemma, two open mining pools choose the number of mining power to attack each other. While mining pools performing this attack lose their own mining power, resulting in the decrease of their expected reward from mining, the overall reward can exceed that from honest mining. \citet{alkalay2019pure} gave a detailed analysis based on this model. It calculated the Nash equilibrium in special cases and proved that the pure price of anarchy ($\mathrm{PPoA}$), which measures how much mining power is wasted due to the attack, is at most 3 in general case. It also conjectured the pure Nash equilibrium of the game is unique and the upper bound of $\mathrm{PPoA}$ is 2.

We advance this game analysis further in this paper by proving the conjecture proposed in \citet{alkalay2019pure}. That is, we prove the existence and uniqueness of the Nash equilibrium, and show the tight bound of $\mathrm{PPoA}$, defined as in \citet{alkalay2019pure}, is $(1,2]$. We provide conditions to determine when this game admits an extreme equilibrium or a non-extreme equilibrium and give an algorithm to calculate it.

Moreover, we actually prove the results in an extended model, which includes previous model from \citet{eyal2015miner,alkalay2019pure} as a special case. Instead of assuming the loyalty of miners as in \citet{eyal2015miner} and \citet{alkalay2019pure}, we allow infiltrators to betray for their own interests, since they can gain more rewards by reporting full solutions without letting the original pool know. The rationality of this betrayal assumption is discussed in detail in Section \ref{betrayal_section}. We will also introduce a betrayal parameter to indicate the percentage of betraying mining power. The game reduces to the original miner's dilemma when no infiltrator betrays.

Overall, our main contributions are as follows:
\begin{enumerate}
 \item We adopt the betrayal assumption and extend the previous 2-player miner's dilemma game to a  more general model. (Section \ref{model_section})
 \item We prove the existence and uniqueness of pure Nash equilibrium of this game. (Theorem \ref{existence} and Theorem \ref{uniqueness})
 \item We show the tight bound of pure price of anarchy ($\mathrm{PPoA}$) is $(1, 2]$. (Theorem \ref{main})
 \item We conduct experiments on the game among three pools, which provide convincing evidence for our conjecture that $N$-player ($N\geq 2$) miner's dilemma game admits a unique pure Nash equilibrium and $\mathrm{PPoA}$ is within $(1,2]$. (Section \ref{experiment_section})
\end{enumerate}

Our paper is organized as follows. In Section \ref{model_section}, we introduce the extended model of 2-player miner's dilemma with betrayal assumption. Then in Section \ref{existence_section}, we prove the existence of pure Nash equilibrium in this game. Considering that the player's strategy space is bounded, we call an equilibrium point taken at the boundary point an extreme equilibrium, and in the middle a non-extreme equilibrium. We discuss these two cases in Section \ref{extreme_section} and Section \ref{non-extreme_section} separately. With the help of the characterization of Nash equilibrium in these two sections, we can prove the uniqueness of Nash equilibrium in the following Section \ref{uniqueness_section}. Finally in Section \ref{ppoa_section}, we use the results in previous sections to induce the bound of $\mathrm{PPoA}$. Our experimental results and conjecture are illustrated in Section \ref{experiment_section}.
 
 
 
 

\section{Related Work}





Block withholding attack was first proposed in \citet{rosenfeld2011analysis} that a miner in a mining pool may have the incentive to withhold valid blocks or delay block submissions. \citet{luu2015power} showed a single attacker can get higher rewards than honestly mining by performing this attack. The attacker could be not only an individual who controls a certain amount of mining power, but also another mining pool. \citet{ke2019ibwh} took the difficulty adjustment per 14 days into consideration and proposed a novel adversarial strategy for reward rate named the intermittent block withholding attack (IBWH). In the model proposed by \citet{qin2020optimal}, the probability of generating a solution per unit mining power of each pool to be a complete solution varies among pools.

\citet{eyal2015miner} further considered the case when two pools perform attacks on each other and discovered the miner’s dilemma. Each mining pool is trying to optimize its own average reward, but ending with the loss of mining power in the Nash equilibrium. \citet{eyal2015miner} briefly claimed that to calculate a Nash equilibrium in a 2-player miner's dilemma, we only need to solve the vanishing partial derivatives using symbolic computation tools. However, \citet{alkalay2019pure} showed the claim is incorrect and provided a detailed analysis. They also proved the existence of pure Nash equilibrium in 2-player miner's dilemma game. For symmetric case and no-other-miner case, they characterized the Nash Equilibrium, and for general case, they gave an upper bound $3$ of the game's pure price of anarchy, specially defined in its paper. 

\citet{karpova2019game}, based on the result by \citet{alkalay2019pure}, used experimental simulations to see how the number of mining pools affects the convergence. They also extended the simulation by allowing miners to switch between pools or to choose solo mining. \citet{li2020mining} established a model with system rewards and punishments and analyzed the Nash equilibrium. They also considered the betrayal rate, but the definition of betrayal is different from ours. In \citet{li2020mining}, a betrayer will pledge their loyalty to the opponent pool directly. \citet{haghighat2019block} modeled pool block withholding attack as a stochastic game and used reinforcement learning to see its evolution.

Our model mainly follows \citet{eyal2015miner} and \citet{alkalay2019pure}. The mining efficiencies are the same between pools and the optimization goal is the average reward rather than the reward rate. The only difference is that we introduce a betrayal parameter to model infiltrators' betrayal. 
\section{Model}
\label{model_section}


In this section, we give a formal description of our model. We consider the game between two mining pools as two players. Each mining pool can choose how much mining power will be sent to the other pool as its strategy. After specifying notations and basic assumptions of our model, we present our betrayal assumptions in detail and express the reward functions for the two pools to be optimized.

We assume the total mining power of the system and the mining power of each pool are fixed. Let $m$ denote the total mining power of the system and let $m_i$ denote the mining power of pool $i$, $i\in \{1, 2\}$. The values of $m, m_1, m_2$ should all be positive. There may be other mining power outside the game, solo miners or other mining pools, but we assume they have no interaction with these two pools. We denote this left part of mining power as $t$ and $t = m - m_1 - m_2 \ge 0$. At some steps of analysis, we shall replace $m$ with $m_1+m_2+t$ without notice. Let $x_i$ denote the amount of the mining power used by pool $i$ to attack the other, $x_i\in [0, m_i]$. Thus, a pure strategy profile is $(x_1, x_2)$.

\subsection{Betrayal Assumption}
\label{betrayal_section}


Although the mining power infiltrating the opposite is essentially a number of machines, they are still controlled by human miners. Pool managers need to consider miner's personal interests at the micro level when making macro decisions. Specifically, infiltrators may betray the original pool for their own interests, which means they may report full solutions without telling the original pool. 

This report process can be secret, so that the original mining pool cannot distinguish between betraying miners and loyal miners. More importantly, since full solutions are also counted as shares, betraying miners can get more reward from the opponent mining pool and hide the extra reward for himself. Although the extra reward could be negligible, even $0$, it is always non-negative, and positive in expectation, so infiltrators do have motives to betray.

What needs to be emphasized is that betraying, in our setting, is different from joining the other pool directly. Joining the other pool directly means the miner will only participate in the mining reward distribution of the opponent pool. While Betraying means that the miner still participates in the reward distribution twice like other non-betraying infiltrators, but after receiving the reward from the opponent pool, he can secretly hide the unclaimed part and then participate in the reward distribution of the original pool. Actually, if infiltrators join the other pool directly, it is equivalent to the game assuming loyalty with different $m_i$ initially.

Considering that different miners own different moral thresholds and that some miners are not even aware he is an infiltrator \cite{courtois2014longest}, not all infiltrators will betray. Here we introduce a betrayal parameter, $p\in [0, 1]$, to represent the percent of betrayal. The effective mining power of pool $i$ attacking the other pool is actually $(1-p)x_i$. For example, if there are $100$ miners controlling identical mining machines and $p = 0.4$, $40$ miners will choose to betray. Notice the model employed in previous work \cite{eyal2015miner,alkalay2019pure} is a special case of ours when $p=0$.

\subsection{Reward Functions}

Given the above notations, we are ready to express the optimization goals for both sides. Since the mining power of each pool is fixed, maximizing the total reward and maximizing the average reward (reward per unit mining power) are equivalent. 

The total effective mining power of the system is $m-(1-p)(x_1+x_2)$. The effective mining power of mining pool $1$ is $m_1-x_1 + px_2$, and that of mining pool $2$ is $m_2 - x_2 + px_1$. The direct reward of pool $i$ from the Bitcoin system, denoted as $R_i(x_1, x_2)$, is proportional to the fraction of the effective mining power contributed to the system by the pool.

$$R_{1}\left(x_{1}, x_{2}\right)=\frac{m_{1}-x_{1} + px_{2}}{m-(1-p)(x_{1}+x_{2})},$$
$$R_{2}\left(x_{1}, x_{2}\right)=\frac{m_{2}-x_{2}+px_{1}}{m-(1-p)(x_{1}+x_{2})}.$$

If $m-(1-p)(x_1+x_2) = 0$, it means $x_1 = m_1$, $x_2 = m_2$ and $t = p = 0$. Nobody can get reward because nobody is mining. Obviously, either pool can retreat a positive amount of mining power to get a positive reward. $(m_1, m_2)$ cannot be a Nash equilibrium, so that we can always assume $m-(1-p)(x_1+x_2)>0$ and the above functions are well defined.

In addition to direct rewards, each pool will get reward from infiltrating the other pool, which should be the product of the average reward of the other pool and the amount of infiltrating mining power\footnote{Note this is only an approximation. The real value will be less than it because infiltrators do not provide full solutions, so less shares lead to less rewards. But the difference is negligible while calculating, especially for the simplicity of the model. No previous work has considered to be more precise. The same goes for the reward hidden by the betrayers, which can be ignored from a macro perspective.}. Let $r_i(x_1, x_2)$ denote the average reward 
of pool $i$, $i \in \{1,2\}$, we have 

$$r_{1}\left(x_{1}, x_{2}\right) =\frac{R_{1}\left(x_{1}, x_{2}\right)+x_{1} r_{2}\left(x_{1}, x_{2}\right)}{m_{1}+x_{2}},$$
$$r_{2}\left(x_{1}, x_{2}\right) =\frac{R_{2}\left(x_{1}, x_{2}\right)+x_{2} r_{1}\left(x_{1}, x_{2}\right)}{m_{2}+x_{1}}.$$

After substituting $R_{1}\left(x_{1}, x_{2}\right), R_{2}\left(x_{1}, x_{2}\right)$ with their expressions and solving the above system of equations, we get

$$r_{1}\left(x_{1}, x_{2}\right)=\frac{m_{1} m_{2}+m_{1} x_{1}+ pm_{2}x_{2}-(1-p)x_{1}^{2}-(1-p)x_{1} x_{2} }{\left(m-(1-p)(x_1+x_2)\right)\left(m_{1} m_{2}+m_{1} x_{1}+m_{2} x_{2}\right)},$$
$$r_{2}\left(x_{1}, x_{2}\right)=\frac{m_{1} m_{2}+m_{2} x_{2}+ pm_{1}x_{1} -(1-p)x_{2}^{2}-(1-p)x_{1} x_{2} }{\left(m-(1-p)(x_1+x_2)\right)\left(m_{1} m_{2}+m_{1} x_{1}+m_{2} x_{2}\right)}.$$

Again, the total reward to pool i is $m_ir_i(x_1, x_2)$, but it is enough to only consider the average reward as $m_i$ is fixed. 

Note $r_{1}\left(x_{1}, x_{2}\right)=r_{2}\left(x_{1}, x_{2}\right)\equiv\frac{1}{m}$ when $p=1$, which means however much mining power is used to attack, the average reward will remain the same if every infiltrator betrays. In this case, every strategy profile $(x_1,x_2)$ is a pure Nash Equilibrium. Thus we only consider $p\in[0,1)$.


\section{Existence of Nash Equilibrium}
\label{existence_section}


We only focus on the pure Nash Equilibrium in this work. A strategy profile is a pure Nash equilibrium if each player cannot improve his reward by changing his strategy. In this game, $\left(x_{1}^{*}, x_{2}^{*}\right)$ is a pure Nash Equilibrium if the average reward of each mining pool is maximized given the other's strategy, i.e., if

$$x_{1}^{*}=\underset{x_{1} \in\left[0, m_{1}\right]}{\arg \max } r_{1}\left(x_{1}, x_{2}^{*}\right), \quad x_{2}^{*}=\underset{x_{2} \in\left[0, m_{2}\right]}{\arg \max } r_{2}\left(x_{1}^{*}, x_{2}\right).$$

We provide the first order partial derivatives of two reward functions.
\begin{small}
\begin{equation*}
\begin{aligned}
\partial_{x_{1}} r_{1}\left(x_{1}, x_{2}\right)&=
\frac{
\begin{array}{cc}

(1-p)\left\{(1-p)\left[m_{2} x_{2}\left(x_{1}+x_{2}\right)^{2} +  m_{1} m_{2} x_{1}^{2}\right]\right. \\
+ m_{1}^{2}\left(m_{2}+x_{1}\right)^{2} + pm_2^2x_2^2 - mm_{2} x_{2}\left(2 x_{1}+x_{2}\right)\\
+ \left.m_{1}\left[m_{2}\left((1+p)m_{2} x_{2}+2 x_{1} x_{2}\right)-m x_{1}\left(2 m_{2}+x_{1}\right)\right]\right\}
\end{array}
}{
\begin{array}{cc}
\left(m-(1-p)(x_{1}+x_{2}) \right)^{2}\left(m_{1} m_{2}+m_{1} x_{1}+m_{2} x_{2}\right)^{2}
\end{array}
}\\
\partial_{x_{2}} r_{2}\left(x_{1}, x_{2}\right)&=
\frac{
\begin{array}{cc}

(1-p)\left\{(1-p)\left[m_{1} x_{1}\left(x_{1}+x_{2}\right)^{2} +  m_{1} m_{2} x_{2}^{2}\right]\right. \\
+ m_{2}^{2}\left(m_{1}+x_{2}\right)^{2} + pm_1^2x_1^2 - mm_{1} x_{1}\left(2 x_{2}+x_{1}\right)\\
+ \left.m_{2}\left[m_{1}\left((1+p)m_{1} x_{1}+2 x_{1} x_{2}\right)-m x_{2}\left(2 m_{1}+x_{2}\right)\right]\right\}
\end{array}
}{
\begin{array}{cc}
\left(m-(1-p)(x_{1}+x_{2})\right)^{2}\left(m_{1} m_{2}+m_{1} x_{1}+m_{2} x_{2}\right)^{2}
\end{array}
}
\end{aligned}
\end{equation*}
\end{small}

\begin{lemma}
\label{concavity_lemma}
$r_i(x_1,x_2)$ is concave with respect to $x_i$, $i \in \{1,2\}$. 
\end{lemma}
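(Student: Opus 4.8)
The plan is to reduce to a one-variable problem and then exploit the rational structure of $r_1$. By the symmetry between the two pools it suffices to show that $r_1(\cdot,x_2)$ is concave on $[0,m_1]$ for each fixed $x_2\in[0,m_2]$. Its denominator is $D(x_1)=\bigl(m-(1-p)(x_1+x_2)\bigr)\bigl(m_1m_2+m_1x_1+m_2x_2\bigr)$, a product of two affine functions of $x_1$, namely $A-Bx_1$ and $C+Ex_1$ with $A=m-(1-p)x_2>0$, $B=1-p>0$, $C=m_2(m_1+x_2)>0$ and $E=m_1>0$. Since $A-Bx_1=m-(1-p)(x_1+x_2)\ge 0$ and $C+Ex_1>0$ on $[0,m_1]\times[0,m_2]$ --- with equality in the former only at the corner $(m_1,m_2)$ when $t=p=0$, which the text already discards --- the function $r_1$ is well defined and smooth there.

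I would then use that the numerator $N(x_1)$ of $r_1$ and the denominator $D(x_1)$ are both quadratic in $x_1$ with leading-coefficient ratio $1/m_1$, so that polynomial division followed by a partial-fraction split over the two distinct linear factors of $D$ yields
$$r_1(x_1,x_2)=\frac{1}{m_1}+\frac{\lambda}{A-Bx_1}+\frac{\kappa}{C+Ex_1}$$
for constants $\lambda,\kappa$ that depend on $m_1,m_2,t,p,x_2$ but not on $x_1$. Differentiating twice in $x_1$ gives $\partial_{x_1}^2 r_1=2B^2\lambda\,(A-Bx_1)^{-3}+2E^2\kappa\,(C+Ex_1)^{-3}$, and since $A-Bx_1>0$ and $C+Ex_1>0$ on the domain above, concavity of $r_1$ in $x_1$ reduces to the two sign conditions $\lambda\le 0$ and $\kappa\le 0$.

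To settle these I would apply the cover-up rule: $\lambda$ equals a positive multiple of $N(A/B)$ and $\kappa$ a positive multiple of $N(-C/E)$ (the relevant quantities $B$, $E$ and $BC+AE$ all being positive). A short evaluation gives $N(A/B)=m_1m_2+pm_2x_2-(m-(1-p)x_2)(m_2+t)/(1-p)$ and $N(-C/E)=\frac{(1-p)m_2}{m_1^2}\bigl(m_1x_2^2-m_2(m_1+x_2)^2\bigr)$. The first is $\le 0$ because $m-(1-p)x_2=(m_1+px_2)+(m_2-x_2)+t\ge m_1+px_2$ while $m_2+t\ge m_2$ and $1-p\le 1$, so that $(m-(1-p)x_2)(m_2+t)\ge(m_1+px_2)m_2\ge(1-p)(m_1m_2+pm_2x_2)$; the second is $\le 0$ because $m_2(m_1+x_2)^2\ge 4m_1m_2x_2\ge m_1x_2^2$ by AM--GM together with $x_2\le m_2$. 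Hence $\lambda,\kappa\le 0$, so $\partial_{x_1}^2 r_1\le 0$ and $r_1(\cdot,x_2)$ is concave; the argument for $r_2(x_1,\cdot)$ is identical by symmetry.

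The key idea is the partial-fraction decomposition, after which everything is formal; accordingly I expect the main obstacle to be the bookkeeping of the middle paragraph --- carrying out the division and the partial-fraction split cleanly, and simplifying $N(A/B)$ and $N(-C/E)$ into forms whose nonpositivity is transparent given the model constraints $0\le x_i\le m_i$, $t\ge 0$ and $p\in[0,1)$. One should also keep an eye on the domain, since concavity is asserted on the closed interval $[0,m_1]$ and this requires $D>0$ throughout (which, as noted, fails only at the single discarded corner). The brute-force alternative --- differentiating the displayed formula for $\partial_{x_1}r_1$ once more and proving the resulting numerator is nonpositive --- is available but leads to a degree-six polynomial inequality and appears considerably messier.
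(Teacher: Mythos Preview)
Your proposal is correct and takes a genuinely different route from the paper. The paper follows precisely the ``brute-force alternative'' you describe at the end: it differentiates once more, organizes the numerator of $\partial_{x_1}^2 r_1$ as a quadratic polynomial $\sum_{i=0}^2 c_i(m_1,m_2,p,x_1,x_2)\,t^i$ in the outside-power variable $t$, checks that $c_1,c_2<0$ by inspection, and then spends the remaining effort on showing $c_0\le 0$ via further derivative analysis in $m_1$ and $m_2$. Your partial-fraction decomposition bypasses all of that polynomial bookkeeping: recognizing that the denominator of $r_1$ splits into two distinct linear factors in $x_1$ (one root positive, the other negative, so distinctness is automatic) reduces the second-derivative sign to two scalar conditions $N(A/B)\le 0$ and $N(-C/E)\le 0$, each of which follows in a couple of lines from the model constraints $x_2\le m_2$, $t\ge 0$, $p\in[0,1)$. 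This is both shorter and more structural than the paper's argument. One minor quibble of phrasing: saying concavity ``reduces to'' $\lambda\le 0$ and $\kappa\le 0$ slightly overstates matters, since those conditions are merely sufficient for $\partial_{x_1}^2 r_1\le 0$; but as you go on to establish both, the proof is complete.
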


\begin{proof}[proof of lemma \ref{concavity_lemma}]
We only prove the concavity of $r_1(x_1,x_2)$ as the proof of $r_2(x_1,x_2)$ is symmetric. We organize the numerator of $\partial^2_{x_1}r_1(x_1,x_2)$ as a quadratic polynomial in $t$, $$\sum_{i=0}^2 c_i(m_1, m_2, p, x_1, x_2)t^i,$$ where $c_i(\cdot)$ is the coefficient of the $i$-order term, $i \in \{0, 1, 2\}$.


It is easy to check $c_1(\cdot)$ and $c_2(\cdot)$ are negative. To prove $r_i(x_1,x_2)$ has a non-positive second derivative, it is sufficient to show $c_0(\cdot) \leq 0$ since $t\geq 0$. Proving $c_0(\cdot) \leq 0$ may require some efforts. Denote $\frac{c_0(\cdot)}{2m_2(1-p)}$ as $h(m_1, m_2)$, $m_1\in[x_1,+\infty]$, $m_2\in[x_2,+\infty]$. By computing the higher derivatives of $h(\cdot)$ with respect to $m_2$ then $m_1$, we find $h(m_1, m_2)$ achieves its maximum when $m_1 = x_1$ and $m_2 = x_2$.

$$h(m_1, m_2) \leq h(x_1, x_2) \leq 0 $$

Therefore we have $\partial^2_{x_1}r_1(x_1,x_2) \leq 0$ and the proof is complete.

\end{proof}

The following Theorem \ref{existence} proves the existence of the Nash equilibrium using the concavity, and gives the sufficient and necessary condition for a strategy profile $(x^*_1,x^*_2)$ to be a Nash equilibrium.

\begin{theorem}
\label{existence}
Every two-player miner’s dilemma game with betrayal assumption admits at least one pure Nash equilibrium. For each Nash equilibrium $(x_1^{*}, x_2^{*})$, $x_1^{*}$ meets one and only one of the following conditions; a symmetric statement also holds for $x_{2}^{*}$.

(a) $\partial_{x_1} r_{1}\left(x_{1}, x_{2}^{*}\right) \leq \partial_{x_1} r_{1}\left(0, x_{2}^{*}\right) \leq 0\ \land\ x_{1}^{*} = 0$;

(b) $\partial_{x_1} r_{1}\left(x_{1}, x_{2}^{*}\right) \geq \partial_{x_1} r_{1}\left(m_1, x_{2}^{*}\right) \geq 0\ \land\ x_{1}^{*} = m_1$;

(c) $\partial_{x_1} r_{1}\left(x_{1}^{*}, x_{2}^{*}\right) = 0\ \land\ x_{1}^{*}\in (0, m_1)$.

\end{theorem}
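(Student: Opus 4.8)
The plan is to prove existence first via a standard fixed-point argument, then derive the characterization (a)--(c) from the concavity established in Lemma~\ref{concavity_lemma}. For existence, I would invoke the classical Debreu--Glicksberg--Fan theorem: each player's strategy space $[0,m_i]$ is a nonempty compact convex subset of $\mathbb{R}$, each reward function $r_i(x_1,x_2)$ is continuous on $[0,m_1]\times[0,m_2]$ (the denominator $m-(1-p)(x_1+x_2)$ is bounded away from $0$ on this domain, as already argued in the Model section once we exclude the degenerate corner), and by Lemma~\ref{concavity_lemma} $r_i$ is concave in player $i$'s own variable. These three conditions guarantee a pure Nash equilibrium exists. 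Alternatively, one can construct the best-response correspondence $x_i \mapsto \arg\max_{x_i} r_i(x_i,x_{-i})$, note it is single-valued and continuous by strict-enough concavity (or at worst upper hemicontinuous and convex-valued), and apply Brouwer/Kakutani to the product map on $[0,m_1]\times[0,m_2]$.

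For the characterization, fix a Nash equilibrium $(x_1^*,x_2^*)$ and consider the single-variable function $x_1 \mapsto r_1(x_1,x_2^*)$ on $[0,m_1]$, which is concave by Lemma~\ref{concavity_lemma}. Concavity means $\partial_{x_1} r_1(x_1,x_2^*)$ is nonincreasing in $x_1$. A point $x_1^*$ maximizes a concave differentiable function on $[0,m_1]$ if and only if one of three mutually exclusive situations holds: the derivative is already $\le 0$ at the left endpoint (so the function is nonincreasing throughout and the max is at $x_1^*=0$); the derivative is still $\ge 0$ at the right endpoint (max at $x_1^*=m_1$); or there is an interior critical point where the derivative vanishes. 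Translating "the derivative is $\le 0$ at $0$" together with monotonicity of $\partial_{x_1} r_1$ gives exactly $\partial_{x_1} r_1(x_1,x_2^*) \le \partial_{x_1} r_1(0,x_2^*) \le 0$ for all $x_1$, i.e.\ condition~(a); symmetrically for (b); and the interior case is (c). Mutual exclusivity follows because the three cases partition the sign behavior of a monotone function; e.g.\ if (a) and (c) both held we would need $\partial_{x_1} r_1(0,x_2^*)\le 0$ and $\partial_{x_1} r_1(x_1^*,x_2^*)=0$ with $x_1^*>0$, forcing $\partial_{x_1} r_1$ to be constant $=0$ on $[0,x_1^*]$, which one checks is impossible given the explicit derivative (the numerator is not identically zero in $x_1$). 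The symmetric statement for $x_2^*$ is identical with indices swapped.

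The main obstacle I anticipate is not existence — that is essentially a citation — but making the case analysis genuinely clean: one must confirm that the three conditions are exhaustive (every maximizer falls into one) and mutually exclusive (no maximizer falls into two), and the exclusivity at the boundary between an endpoint case and the interior case requires ruling out the knife-edge where the derivative is zero exactly at an endpoint \emph{and} the equilibrium is claimed to be at that endpoint versus interior. I would handle this by being careful with the strict/non-strict inequalities as the paper has written them: in (a) the equilibrium is pinned to $x_1^*=0$, in (c) to $x_1^* \in (0,m_1)$, so these cannot coincide as \emph{locations} even if the derivative value happens to agree; the only real content is that at least one case always applies, which is just the first-order optimality condition for concave functions on an interval. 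A secondary, purely bookkeeping point is verifying continuity of $r_i$ including that the excluded corner $(m_1,m_2)$ with $t=p=0$ never arises at equilibrium, but this is already dispatched in the Model section and can simply be cited.
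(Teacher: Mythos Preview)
Your proposal is correct and follows essentially the same route as the paper: invoke Glicksberg's theorem using compact convex strategy sets, continuity, and the concavity from Lemma~\ref{concavity_lemma}, then read off the three first-order cases from concavity. The only point where the paper is more explicit than you is the $t=p=0$ case: there $r_i$ is genuinely discontinuous at $(m_1,m_2)$, so Glicksberg does not apply on the full box, and the paper handles this not as bookkeeping but by citing the explicit equilibrium already computed in \cite{alkalay2019pure} for that case, reserving the Glicksberg argument for $t>0$ or $p>0$ where continuity holds everywhere.
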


\begin{proof}[proof of theorem \ref{existence}]

As \citet{alkalay2019pure} has shown, When $t=0$ and $p=0$, the theorem is true and the unique Nash equilibrium is given by

\[
\left(x_{1}^{*}, x_{2}^{*}\right)=\left\{\begin{array}{ll}
\left(0, \frac{m_{2}}{2}\right) & \text { if } m_{1} \leq \frac{m_{2}}{4}, \\
\left(\frac{m_{1}}{2}, 0\right) & \text { if } m_{2} \leq \frac{m_{1}}{4}, \\
\left(\frac{\sqrt{m_{1} m_{2}}(2 \sqrt{m_{1}}-\sqrt{m_{2}})}{\sqrt{m_{1}}+\sqrt{m_{2}}}, \right.\\
\left.\frac{\sqrt{m_{1} m_{2}}(2 \sqrt{m_{2}}-\sqrt{m_{1}})}{\sqrt{m_{1}}+\sqrt{m_{2}}}\right) & \text { otherwise. }
\end{array}\right.
\]

When $t>0$ or $p>0$, $m-(1-p)(x_1+x_2)>0$ holds, so reward functions are always well defined in $[0, m_1]\times[0, m_2]$. Then we can leverage the result by \citet{glicksberg1952further}. In our model, the strategy sets $[0, m_1]$ and $[0, m_2]$ are nonempty, compact and convex. For $i \in \{1, 2\}$, $r_i(x_1, x_2)$ is continuous in $[0, m_1]\times[0, m_2]$ and concave in $x_i$ by Lemma \ref{concavity_lemma}. Three conditions in Glicksberg's Existence Theorem are all satisfied, so the existence of pure Nash equilibrium is guaranteed. 

The concavity of reward functions also indicates the left part of theorem is true.

\end{proof}

In fact, the pure Nash equilibrium of this game is unique, but we will elaborate it later in Section \ref{uniqueness_section}. Before that, we need to characterize the Nash equilibrium as a preparation for the proof of uniqueness.

\section{Extreme Equilibrium}
\label{extreme_section}

We start from condition (a) and (b) in Theorem \ref{existence}, to see when $x_i=0$ or $x_i=m_i$ becomes a Nash equilibrium and how the Nash equilibrium will look like. We show $x_i=0$ is feasible under some constraints and show the corresponding best strategy of the other, while using all mining power to infiltrate will never be the equilibrium strategy. 

\begin{lemma}
\label{extreme0}
The unique pure Nash equilibrium is given by

\[
\begin{small}
\left(x_{1}^{*}, x_{2}^{*}\right)=\left\{\begin{array}{ll}
\left(0, \frac{m_{2}}{2}\right) & \text { if } t = 0, p = 0, \\
\left(0, \begin{small}\frac{m_{1}\left(m_{2}-m+\sqrt{m^{2}-mm_{2}-(1-p)m_{1} m_{2}}\right)}{m-(1-p)m_{1}-m_{2}}\end{small}\right) & \text { otherwise, }
\end{array}\right.
\end{small}
\]
if and only if $g(t, p, m_1, m_2) \leq 0$, where g is a function defined by
\begin{equation*}
\begin{small}
\begin{array}{rl}
g(t, p, m_1, m_2) =& 4t^3 \\
&+ 4(3m_1+m_2)t^2 \\
&+ \left[12m_1^2 + (2+8p-2p^2)m_1m_2 - (1-p)^2m_2^2\right]t \\
&+ m_1(m_1 + pm_2)(4m_1 - (1-p)^2m_2).
\end{array}
\end{small}
\end{equation*}
Moreover, we have $x^*_1+x^*_2<\frac{m_1+m_2}{2}$.

A symmetric statement holds if $g(t, p, m_2, m_1) \leq 0$. In that case, $x_2^{*}=0$ and notations $m_1$ and $m_2$ swap in the corresponding $x_1^{*}$.
\end{lemma}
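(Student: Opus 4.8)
The plan is to pin down every profile that can be a pure Nash equilibrium with $x_1^{*}=0$, show it must be the one displayed, and then read off from a single best-response inequality exactly when that profile is an equilibrium. Throughout I would use that $r_i$ is concave in its own coordinate (Lemma~\ref{concavity_lemma}), so that each best-response condition becomes a statement about one interior critical point or an endpoint. First I would dispose of the degenerate case $t=p=0$: this is precisely the setting of \citet{alkalay2019pure}, whose classification gives the equilibrium $(0,\tfrac{m_2}{2})$ exactly when $m_1\le\tfrac{m_2}{4}$, and since $g(0,0,m_1,m_2)=m_1^{2}(4m_1-m_2)$ we have $g\le 0\iff m_1\le\tfrac{m_2}{4}$, matching the first branch. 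From then on I would assume $t>0$ or $p>0$, so that $m-(1-p)(x_1+x_2)>0$ on all of $[0,m_1]\times[0,m_2]$, every reward function is smooth there, and $pm_1+t>0$.

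Assuming next that $(x_1^{*},x_2^{*})$ is an equilibrium with $x_1^{*}=0$, I would compute Player~2's best reply to $x_1=0$. Setting $\partial_{x_2}r_2(0,x_2)=0$ and cancelling the strictly positive common factors of the numerator leaves the quadratic
\[
(pm_1+t)\,x_2^{2}+2m_1(m_1+t)\,x_2-m_1^{2}m_2=0 ,
\]
which, having negative constant term and positive leading coefficient, possesses a unique positive root $x_2^{\dagger}$. A short simplification --- using the identity $(m_1+t)^{2}+(pm_1+t)m_2=m^{2}-mm_2-(1-p)m_1m_2$ to rewrite the discriminant --- identifies $x_2^{\dagger}$ with the expression appearing in the statement. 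Evaluating the quadratic at $x_2=m_2$ gives $m_2\bigl((pm_1+t)m_2+m_1^{2}+2m_1t\bigr)>0$, so $0<x_2^{\dagger}<m_2$; by concavity $x_2^{\dagger}$ is then Player~2's unique best reply, forcing $x_2^{*}=x_2^{\dagger}$ (in particular $(0,0)$ is not an equilibrium, as $\partial_{x_2}r_2(0,0)>0$).

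It then remains to decide when $(0,x_2^{\dagger})$ is genuinely an equilibrium, i.e.\ when Player~1 has no profitable deviation; by concavity of $r_1(\cdot,x_2^{\dagger})$ this is equivalent to $\partial_{x_1}r_1(0,x_2^{\dagger})\le 0$. After clearing positive factors, $\partial_{x_1}r_1(0,x_2)$ has the sign of a cubic $\phi(x_2)$ with positive constant term $m_1^{2}m_2$ and positive leading coefficient. I would reduce $\phi$ modulo the quadratic above to an affine remainder, so that $\phi(x_2^{\dagger})$ equals that remainder evaluated at $x_2^{\dagger}$; then, after fixing the sign of the remainder's slope and comparing $x_2^{\dagger}$ with the remainder's root through the quadratic (since $x_2^{\dagger}$ is its positive root), the inequality $\phi(x_2^{\dagger})\le 0$ should collapse to a polynomial inequality in $(t,p,m_1,m_2)$ that factors as a strictly positive quantity times $g(t,p,m_1,m_2)$. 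I expect this last algebraic reduction --- pushing the radical/quadratic constraint through the cubic and checking that everything lines up with the prescribed $g$ --- to be the main obstacle; it is the kind of polynomial identity one verifies with computer algebra, taking care about the signs used when clearing denominators and extracting roots. The symmetric statement would follow verbatim after interchanging $m_1$ and $m_2$.

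Finally, for the extra inequality: since $x_1^{*}=0$ it suffices to show $x_2^{\dagger}<\tfrac{m_1+m_2}{2}$, and as the quadratic opens upward, is negative at $0$, and vanishes at $x_2^{\dagger}$, this reduces to checking that its value at $\tfrac{m_1+m_2}{2}$, namely $(pm_1+t)\tfrac{(m_1+m_2)^{2}}{4}+m_1\bigl(m_1^{2}+m_1t+m_2t\bigr)$, is positive --- which is immediate. This would establish that $g(t,p,m_1,m_2)\le 0$ is equivalent to the existence of an equilibrium with $x_1^{*}=0$ and that, in that case, $(0,x_2^{\dagger})$ is the only such equilibrium; the word ``unique'' in the statement is then secured by combining this with the non-extreme analysis of Section~\ref{non-extreme_section}, the fact that $x_i=m_i$ is never an equilibrium strategy, and the global uniqueness of Section~\ref{uniqueness_section}.
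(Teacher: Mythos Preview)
Your proposal is correct and follows essentially the same path as the paper: solve $\partial_{x_2}r_2(0,x_2)=0$ for Player~2's interior best reply, then reduce Player~1's no-deviation condition $\partial_{x_1}r_1(0,x_2^{\dagger})\le 0$ to the polynomial inequality $g\le 0$ by symbolic computation (the paper likewise defers this step to ``symbolic calculation'' rather than spelling out the algebra). Your verification of $x_2^{\dagger}<\tfrac{m_1+m_2}{2}$ by evaluating the defining quadratic at the midpoint is in fact a bit cleaner than the paper's direct manipulation of the closed-form radical expression for $x_2^{\dagger}$.
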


To deliver a better understanding of the function $g$ in Lemma \ref{extreme0}, we give the following equivalent condition.
\begin{equation*}
\begin{small}
g(t, p, m_1, m_2) \leq 0 \Leftrightarrow m_1 \leq \frac{(1-p)^2}{4}m_2\ \land\ t \in [0, t^{*}(p, m_1, m_2)],
\end{small}
\end{equation*}
where $t^{*}(p, m_1, m_2)$ is the unique positive solution to 

$$g(t, p, m_1, m_2) = 0, \  m_1 \leq \frac{(1-p)^2}{4}m_2.$$

Thus intuitively, this game admits $(0, x_2^{*})$ or $(x_1^{*}, 0)$ as a Nash equilibrium if and only if one pool owns much higher mining power than the other and the mining power outside the two mining pools is small.

\begin{proof}[proof of lemma \ref{extreme0}]

We only need to prove one side when $x_1^{*}=0$. Let $(0,x_2^*)$ be a pure Nash equilibrium, then

\begin{equation*}
\begin{small}
   \partial_{x_2}r_2(0,x_2)=\frac{m^2_2(m_1+x_2)^2+m_2\left[(1-p)m_1x^2_2-mx_2(2m_1+x_2)\right]}{(m-x_2)^2(m_1m_2+m_2x_2)^2/(1-p)}
\end{small}
\end{equation*}

The denominator is always positive and the numerator is a quadratic polynomial in $x_2$, the quadratic term of which is $m_2(m_2+(1-p)m_1-m)$. If the quadratic term is 0, i.e., $t=0$ and $p=0$, we can yield $x^*_2=\frac{m_2}{2}$ by solving $ \partial_{x_2}r_2(0,x_2)=0$. Otherwise, solving the quadratic equation, we get $x_2=\frac{m_{1}\left(m_{2}-m\pm\sqrt{m^{2}-mm_{2}-(1-p)m_{1} m_{2}}\right)}{m-(1-p)m_{1}-m_{2}}$. Only the positive root is within $[0,m_2]$. Notice $ \partial_{x_2}r_2(0,x_2)=0$ already yields a solution, so neither $x_2^*=0$ nor $x_2^*=m_2$ can be feasible.

When $t>0$ or $p>0$, solving $$ \partial_{x_1}r_1(0,\frac{m_{1}\left(m_{2}-m+\sqrt{m^{2}-mm_{2}-(1-p)m_{1} m_{2}}\right)}{m-(1-p)m_{1}-m_{2}}) \leq 0, $$ we get $g(t, p, m_1, m_2) \leq 0$ by symbolic calculation. When $t=0$ and $p=0$, using $x^*_2=\frac{m_2}{2}$ we get $m_1\leq\frac{m_2}{4}$, which also leads to $g(0, 0, m_1, m_2) \leq 0$.

Now we prove $x^*_1+x^*_2<\frac{m_1+m_2}{2}$. When $t=0$ and $p=0$, it is trivial to show $x^*_1+x^*_2 = \frac{m_2}{2} <\frac{m_1+m_2}{2}$. When $t>0$ or $p>0$, it is equivalent to prove
\begin{equation*}
    \frac{2m_1\left(-t-m_1+\sqrt{(m_1+t)^2+m_2(t+pm_1)} \right)}{t+pm_1}< m_1+m_2
\end{equation*}
\begin{equation*}
    \Longleftrightarrow\sqrt{1+\frac{m_2(t+pm_1)}{(m_1+t)^2}}<\frac{(m_1+m_2)(t+pm_1)}{2m_1(m_1+t)}+1,
\end{equation*}

and this inequality holds for
\begin{equation*}
    \begin{aligned}
        \left(1+\frac{(m_1+m_2)(t+pm_1)}{2m_1(m_1+t)}\right)^2 &> 1+\frac{(m_1+m_2)(t+pm_1)}{m_1(m_1+t)}\\
        & >1+\frac{m_2(t+pm_1)}{(m_1+t)^2}.
    \end{aligned}
\end{equation*}
\end{proof}

\begin{lemma}
\label{extremeM}
$\left(x_{1}^{*}, x_{2}^{*}\right)$ cannot be a pure Nash equilibrium if $x_{1}^{*}=m_{1}$ or $x_{2}^{*}=m_{2}$.
\end{lemma}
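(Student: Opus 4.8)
The plan is to rule out $x_1^{*}=m_1$ (the case $x_2^{*}=m_2$ being symmetric by swapping indices) by contradicting clause (b) of Theorem \ref{existence}. By Lemma \ref{concavity_lemma} the map $x_1\mapsto r_1(x_1,x_2)$ is concave, so its derivative is non-increasing and hence minimized at $x_1=m_1$; thus $x_1=m_1$ can be a best response to $x_2^{*}$ only if $\partial_{x_1}r_1(m_1,x_2^{*})\ge 0$, which is precisely clause (b). I would therefore show that $\partial_{x_1}r_1(m_1,x_2)<0$ for every admissible $x_2\in[0,m_2]$. Since the denominator of $\partial_{x_1}r_1$ is $\left(m-(1-p)(x_1+x_2)\right)^{2}\left(m_1m_2+m_1x_1+m_2x_2\right)^{2}>0$ (the only profile where this fails is $(m_1,m_2)$ with $t=p=0$, which is excluded from the start because either pool can retreat a positive amount of power and secure a positive reward), it suffices to prove that the numerator of $\partial_{x_1}r_1$ is negative at $x_1=m_1$.

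Concretely, I would substitute $x_1=m_1$ and $m=m_1+m_2+t$ into the numerator displayed above, pull out the positive factor $(1-p)$, and view what remains as a polynomial in $t$. The variable $t$ enters only through the two terms $-mm_2x_2(2x_1+x_2)$ and $-mm_1x_1(2m_2+x_1)$, so the coefficient of $t$ is $-m_2x_2(2m_1+x_2)-m_1^{2}(m_1+2m_2)<0$; hence the expression is non-increasing in $t\ge 0$ and it is enough to bound the constant term, i.e., its value at $t=0$. Collecting that term and cancelling the pieces $m_1^{2}(m_1+m_2)^{2}$ and $-m_1^{2}(m_1+m_2)(m_1+2m_2)$, I expect it to reduce to $m_2\big[(1-p)(x_2-m_2)(m_1^{2}+m_1x_2+x_2^{2})-p\,m_1(m_1^{2}+m_1m_2+x_2^{2})\big]$. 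Both bracketed summands are non-positive because $x_2\le m_2$, so the constant term is $\le 0$, and it is strictly negative unless $p=0$ and $x_2=m_2$ at once --- i.e., on the excluded degenerate profile $(m_1,m_2)$. Consequently $\partial_{x_1}r_1(m_1,x_2)<0$ on every profile that can occur at a Nash equilibrium, contradicting clause (b); so $x_1^{*}=m_1$ is impossible, and the symmetric argument gives $x_2^{*}\ne m_2$.

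The only real difficulty is bookkeeping: pushing the substitutions $x_1=m_1$, $m=m_1+m_2+t$ through the large numerator without slips and recognizing that the $t=0$ term factors into a sum of two visibly non-positive pieces (one carrying the factor $x_2-m_2$, the other the factor $p$). The conceptual part is short --- concavity turns ``best response at the right endpoint'' into a single derivative-sign condition, and the coefficient of $t$ is manifestly negative, so no case split on $t$ is needed. It is also worth stating explicitly that the degenerate profile $(m_1,m_2)$ with $t=p=0$ never has to be checked, since it is already known not to be an equilibrium.
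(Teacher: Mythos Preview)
Your argument is correct, and the algebraic factorization you propose does check out: after substituting $x_1=m_1$, pulling out $(1-p)$, and writing $m=m_1+m_2+t$, the $t$–coefficient is $-m_2x_2(2m_1+x_2)-m_1^{2}(m_1+2m_2)<0$, and the $t=0$ part of the numerator indeed equals
\[
m_2\Big[(1-p)(x_2-m_2)(m_1^{2}+m_1x_2+x_2^{2})\;-\;p\,m_1(m_1^{2}+m_1m_2+x_2^{2})\Big],
\]
which is $\le 0$ with equality only at $p=0,\ x_2=m_2$. Together with $t\ge 0$ this yields $\partial_{x_1}r_1(m_1,x_2)<0$ for every admissible $x_2$, contradicting clause~(b) of Theorem~\ref{existence}; the case $(m_1,m_2)$ with $t=p=0$ is already excluded, and the case $x_2^{*}=m_2$ follows by symmetry. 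So the proposal is sound.

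It is, however, a genuinely different route from the paper's. The paper does \emph{not} evaluate the derivative at $x_1=m_1$; instead it compares the two values $r_1(0,x_2)$ and $r_1(m_1,x_2)$ directly and shows that $x_1=0$ strictly beats $x_1=m_1$ for every $x_2\in(0,m_2)$, reducing the inequality $r_1(0,x_2)>r_1(m_1,x_2)$ to the positivity of a cubic $\phi(x_2)$, verified via its values at the endpoints and a monotonicity argument; the endpoints $x_2=0$ and $x_2=m_2$ are then dispatched separately by Lemma~\ref{extreme0} and the exclusion of $(m_1,m_2)$. Your approach is arguably cleaner in that it ties directly into the trichotomy of Theorem~\ref{existence} (you refute clause~(b) outright) and handles all $x_2\in[0,m_2]$ in one stroke without a separate appeal to Lemma~\ref{extreme0}; the paper's approach, on the other hand, avoids the heavier partial-derivative numerator and works only with the reward formulas themselves. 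Both computations are of comparable length.
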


We have mentioned the strategy profile $(m_1, m_2)$ cannot be a Nash equilibrium when introducing the reward functions. Lemma \ref{extremeM} further shows neither pool will infiltrate all the mining power into the opponent pool. Actually, this is so unreasonable a choice that we can easily find a better strategy. In the proof of Lemma \ref{extremeM}, we compare the rewards between infiltrating no mining power and infiltrating it all.

\begin{proof}[proof of lemma \ref{extremeM}]

We only need to prove one side when $x_1^{*}=m_1$. We first prove that for any $x_2 \in (0, m_2)$, $r_1(0,x_2)>r_1(m_1,x_2)$.

\begin{equation*}
    \begin{aligned}
    r_1(0,x_2)&=\frac{m_1m_2+pm_2x_2}{(m-(1-p)x_2)(m_1m_2+m_2x_2)},\\
    r_1(m_1,x_2)&=\frac{m_1m_2+m_1^2+pm_2x_2-(1-p)m_1^2-(1-p)m_1x_2}{(m-(1-p)(m_1+x_2))(m_1m_2+m_1^2+m_2x_2)},
    \end{aligned}
\end{equation*}
and $r_1(0,x_2)>r_1(m_1,x_2)$ is equivalent to 
\begin{equation*}
    -(1-p)x^3_2+(t+pm_1+(1-p)m_2)x^2_2+m_1(t-pm_2)x_2+m^2_1t > 0.
\end{equation*}

Denote the left side of the above inequality as function $\phi(x_2):[0,m_2] \rightarrow \mathbb{R}$, then $\phi(0)=m^2_1t\geq 0$ and $\phi(m_2)=t( m^2_1+m_1m_2+m^2_2)\geq 0$. By derivative analysis, the minimum of $\phi$ in interval $[0,m_2]$ can only be achieved at $0$ or $m_2$. So $$\phi(x_2) > \max(\phi(0), \phi(m_2)) \geq 0.$$

In addition, we have already removed $(m_1, m_2)$ from the very beginning and $(m_1, 0)$ cannot be a Nash equilibrium by Lemma \ref{extreme0}. Hence, $\forall x_2 \in [0, m_2]$, $(m_1, x_2)$ is not a Nash equilibrium and the proof is completed.

\end{proof}

By Theorem \ref{existence}, Lemma \ref{extreme0} and Lemma \ref{extremeM}, we immediately have the following corollary.

\begin{corollary}
\label{add_a_corollary}
Every two-player miner’s dilemma game with betrayal assumption admits at least one pure Nash equilibrium. For each Nash equilibrium $(x_1^{*}, x_2^{*})$, one and only one of the following conditions is met: (a) $x_1^{*} = 0$ ; (b) $x_2^{*} = 0$ ; (c) $(x_1^{*}, x_2^{*})$ is a solution of vanishing partial derivatives within $(0, m_1)\times(0, m_2)$, i.e.,
\begin{equation*}
\begin{array}{rl}
    \partial_{x_{1}} r_{1}\left(x_{1}^{*}, x_{2}^{*}\right) = 0,&\ \partial_{x_{2}} r_{2}\left(x_{1}^{*}, x_{2}^{*}\right) = 0,\\
\textit{s.t.}\quad x_{1}^{*}\in (0, m_1),&\ x_{2}^{*}\in (0, m_2).
\end{array}
\end{equation*}
\end{corollary}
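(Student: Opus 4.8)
The plan is to read the corollary off the three results already in hand, so the write-up should be short; the only genuine work is checking that the three listed conditions are exhaustive and mutually exclusive. Existence is not new: it is literally the first sentence of Theorem~\ref{existence} (via Glicksberg), so I would just restate it. For the characterization, I would fix an arbitrary pure Nash equilibrium $(x_1^{*}, x_2^{*})$ and argue one coordinate at a time.

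First I would apply Theorem~\ref{existence} to $x_1^{*}$: it satisfies exactly one of the alternatives (a), (b), (c) listed there, and a symmetric statement holds for $x_2^{*}$. Then I would invoke Lemma~\ref{extremeM} to discard alternative (b) in each case, since neither $x_1^{*}=m_1$ nor $x_2^{*}=m_2$ can occur at an equilibrium. What remains is that each of $x_1^{*}$ and $x_2^{*}$ is either $0$ or an interior point at which the corresponding partial derivative vanishes. This already gives the trichotomy up to the overlap question: if $x_1^{*}=0$ we are in case (a) of the corollary; if $x_2^{*}=0$ we are in case (b); and if both coordinates are nonzero, then both are interior with vanishing partials, i.e. case (c).

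The one point that needs care -- and the place where ``one and only one'' could conceivably fail -- is ruling out the simultaneous occurrence $x_1^{*}=0$ and $x_2^{*}=0$, i.e. showing $(0,0)$ is not an equilibrium, and, relatedly, confirming that when $x_1^{*}=0$ the other coordinate is \emph{strictly} interior so that case (a) does not also fall under case (c). Both facts come out of Lemma~\ref{extreme0} and its proof: if $(0, x_2^{*})$ is an equilibrium, then $x_2^{*}$ is forced to equal $m_2/2$ in the degenerate subcase $t=p=0$ and the displayed closed-form expression otherwise, and in every subcase that value lies in $(0, m_2)$ -- in particular it is nonzero. By symmetry an equilibrium with $x_2^{*}=0$ forces $x_1^{*}\in(0,m_1)$. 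Hence (a) and (b) are mutually exclusive and each excludes (c).

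So the final argument is: cite Theorem~\ref{existence} for existence and for the coordinatewise trichotomy, use Lemma~\ref{extremeM} to kill the $x_i^{*}=m_i$ branches, and use Lemma~\ref{extreme0} to see that a vanishing coordinate forces the other coordinate into the open strategy interval, which gives both exhaustiveness and mutual exclusivity of (a), (b), (c). I do not expect a real obstacle; the only thing not to overlook is the $(0,0)$ degeneracy, which is exactly why Lemma~\ref{extreme0} is needed alongside the other two results.
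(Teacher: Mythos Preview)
Your proposal is correct and matches the paper's approach exactly: the paper simply states that the corollary follows immediately from Theorem~\ref{existence}, Lemma~\ref{extreme0}, and Lemma~\ref{extremeM}, and you have spelled out precisely how those three results combine. Your extra care in using Lemma~\ref{extreme0} to exclude the $(0,0)$ profile is the right way to secure the ``only one'' part, and the mutual exclusivity of (a)/(b) with (c) is automatic from the requirement $x_i^{*}\in(0,m_i)$ in (c).
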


\section{Non-extreme Equilibrium}
\label{non-extreme_section}

In this section, we restrict our attention to condition (c) in Corollary \ref{add_a_corollary}. We will prove the vanishing partial derivatives yield at most one feasible solution $(x_1^{*}, x_2^{*})$ within $(0, m_1)\times(0, m_2)$, and $x_1^{*}+x_2^{*} \leq \frac{m_1+m_2}{2}$. In special cases, we also express the pure Nash equilibrium and show when $x_1^{*}+x_2^{*} = \frac{m_1+m_2}{2}$ holds.

Simple calculations show $\partial_{x_{i}} r_{i}\left(x_{1}^{*}, x_{2}^{*}\right) = 0$ for $i\in \{1, 2\}$ is equivalent to

\begin{equation*}
    \begin{array}{ll}
        \partial_{x_{1}} r_{1}\left(x_{1}^{*}, x_{2}^{*}\right) + \partial_{x_{2}} r_{2}\left(x_{1}^{*}, x_{2}^{*}\right) & = 0,  \\
        m_1(m_2 + 2x_1^{*})\partial_{x_{1}} r_{1}\left(x_{1}^{*}, x_{2}^{*}\right) - m_2(m_1 + 2x_2^{*})\partial_{x_{2}} r_{2}\left(x_{1}^{*}, x_{2}^{*}\right) & = 0.
    \end{array}
\end{equation*}

The second equation above seems unnatural, but it will lower the degree to make the final expansion simpler. Substituting the partial derivatives, the above equation system is equivalent to

\begin{equation}
\label{trick}
\begin{array}{ll}
(1-p)\left(x_{1}^{*}+x_{2}^{*}\right)^2 -2t\left(x_{1}^{*}+x_{2}^{*}\right) & \\
+ 2(m_{1}m_{2}-m_{1}x_{2}^{*} - m_{2}x_{1}^{*}) - (1-p)(m_{1}x_{1}^{*} + m_{2}x_{2}^{*}) & =0,\end{array}
\end{equation}

\begin{equation}
\label{trick2}
\begin{array}{rc}
m_1m_2[(1 + p + \frac{2t}{m_1})x_2^{*2} + (2m_1 - (1-p)m_2 + 2t)x_2^* & \\- (1 + p + \frac{2t}{m_2})x_1^{*2} - (2m_2 -(1-p)m_1 + 2t)x_1^{*}] & =0.\end{array}
\end{equation}

Let $y^{*} = x_1^{*} + x_2^{*}$ and replace $x_2^{*}$ with $y^{*} - x_1^{*}$ in Equation (\ref{trick}). We get

\begin{equation}
\label{x1,y}
x_1^{*} = \frac{(1-p)y^{*2} - (2m_1+(1-p)m_2+2t)y^{*} + 2m_1m_2}{(m_2 - m_1)(1+p)} \end{equation}

Substituting Equation (\ref{x1,y}) and $x_2^{*} = y^{*} - x_1^{*}$ into Equation (\ref{trick2}), we can get an equation with only one variable $y^*$.

\begin{equation}
\label{f=0}
\begin{array}{ll}
0 =& (1-p)^2t(y^{*})^4 \\
&- (1-p)\left[4t^2 + 4(m_1+m_2)t + (1+p)^2m_1m_2\right](y^{*})^3  \\
&+ \left[4t^3 + 8(m_1+m_2)t^2\right. \\
&\quad+ (4m_1^2 + 4m_2^2 + 11m_1m_2+ 3p^2m_1m_2-2pm_1m_2)t \\
&\quad \left. + (1+p)^2m_1m_2(m_1+m_2)\right](y^{*})^2 \\
&+(1-p)m_1m_2\left[(1+p)^2m_1m_2 - 4(m_1+m_2)t - 4t^2\right]y^{*} \\
&- m_1^2m_2^2\left[(1+p)^2(m1 + m2) +4pt\right]
\end{array}
\end{equation}

Regard the right side of Equation (\ref{f=0}) as a function $f(y):(0, m_1+m_2)\rightarrow \mathbb{R}$. If we can solve $f(y^*) = 0$, then we can calculate $(x_1^*, x_2^*)$ and check whether it is a legal solution.

\subsection{Symmetric Case}

Before we study function $f$, note that in Equation (3), the denominator may be $0$ in the right side of Equation (\ref{x1,y}). So we have to consider the symmetric case separately, i.e., $m_1=m_2$. Lemma \ref{symmetric} shows if two mining pools have the same mining power, they will also choose the same strategy, which is consistent with our intuition.

\begin{lemma}
\label{symmetric}
If $m_{1}=m_{2}=\frac{m}{k}$ where $k \geq 2$, the unique pure Nash equilibrium $\left(x_{1}^{*}, x_{2}^{*}\right)$ is given by
\[
x_{1}^{*}=x_{2}^{*}=\frac{m}{4 (1-p) k}(2 k-(1+p)-\sqrt{(2k-(p+1))^2 - 8(1-p)}).
\]
$x_1^{*}+x_2^{*} \leq \frac{m_1+m_2}{2}$, and equal to $\frac{m_1+m_2}{2}$ if and only if $m_{1}=m_{2}=\frac{m}{2}$.
\end{lemma}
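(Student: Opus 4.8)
The plan is to reduce to the case $m_1=m_2$ and argue in three stages: rule out boundary (extreme) equilibria, show that any interior critical point must be symmetric and solve a single quadratic, and finally read off the sum bound from a sign evaluation of that quadratic. Since $t=m-m_1-m_2=m-2m_1\ge0$ and $m_1=m/k$, the hypothesis $k\ge2$ is automatic; I will treat $k$ as a real parameter $\ge 2$ throughout.

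\emph{Ruling out extreme equilibria.} With $m_1=m_2$, Lemma \ref{extreme0} shows that $(0,x_2^{*})$ or $(x_1^{*},0)$ can be a Nash equilibrium only if $m_1\le\tfrac{(1-p)^2}{4}m_2$, which for $m_1=m_2$ forces $(1-p)^2\ge4$, impossible for $p\in[0,1)$. Lemma \ref{extremeM} excludes $x_i^{*}=m_i$. Hence by Corollary \ref{add_a_corollary} every Nash equilibrium lies in the open box $(0,m_1)\times(0,m_2)$ and satisfies the vanishing partial derivative system, equivalently Equations (\ref{trick})--(\ref{trick2}). Note we cannot pass through the simplified relation (\ref{x1,y}) or Equation (\ref{f=0}) here, since the factor $(m_2-m_1)(1+p)$ in (\ref{x1,y}) vanishes; we must work with (\ref{trick})--(\ref{trick2}) directly.

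\emph{Symmetry and the quadratic.} When $m_1=m_2$, in the bracket of Equation (\ref{trick2}) the coefficient of $x_2^{*2}$ (resp.\ $x_2^{*}$) is exactly minus the coefficient of $x_1^{*2}$ (resp.\ $x_1^{*}$), so that bracket factors as $(x_2^{*}-x_1^{*})\bigl[(1+p+\tfrac{2t}{m_1})(x_1^{*}+x_2^{*})+(1+p)m_1+2t\bigr]$. The second factor is strictly positive and $m_1m_2\neq0$, so Equation (\ref{trick2}) forces $x_1^{*}=x_2^{*}=:x$. Substituting into Equation (\ref{trick}) and using $m=2m_1+t$ and $m_1=m/k$, everything collapses to
\[
q(x):=2(1-p)x^{2}-\tfrac{m}{k}\bigl(2k-(1+p)\bigr)x+\tfrac{m^{2}}{k^{2}}=0 .
\]
Its discriminant is $\tfrac{m^{2}}{k^{2}}\bigl[(2k-(1+p))^{2}-8(1-p)\bigr]$, which is positive for $k\ge2$ (at $k=2$ it equals $\tfrac{m^{2}}{4}(1+p)^{2}$), so $q$ has two real roots, precisely the two $\pm$ expressions in the statement. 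To select the right one I will use $q(0)=\tfrac{m^{2}}{k^{2}}>0$ and $q(m_1)=\tfrac{m^{2}}{k^{2}}(4-p-2k)\le0$ for $k\ge2$: since $q$ opens upward, the smaller root $r_-$ lies in $(0,m_1)$ and the larger root $r_+$ satisfies $r_+\ge m_1$ (with $r_+=m_1$ only in the degenerate case $p=0,k=2$, which is excluded as an equilibrium by Lemma \ref{extremeM}). Thus $r_-$ is the unique admissible solution; it is the root carrying the minus sign, which is the claimed formula. Combining existence (Theorem \ref{existence}), the exclusion of extreme equilibria above, and concavity of $r_i$ in $x_i$ (Lemma \ref{concavity_lemma}, which makes $(r_-,r_-)$ a genuine best reply for each player) gives both existence and uniqueness.

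\emph{The sum bound.} Since $x_1^{*}+x_2^{*}=2r_-$ and $\tfrac{m_1+m_2}{2}=m_1$, the desired inequality is $r_-\le m_1/2$. I will obtain it from $q(m_1/2)=\tfrac{m^{2}}{k^{2}}(2-k)\le0$ for $k\ge2$: together with $q(0)>0$ and the upward opening of $q$ this yields $r_-\le m_1/2$, with equality exactly when $q(m_1/2)=0$, i.e.\ $k=2$, i.e.\ $m_1=m_2=m/2$. The conceptual content is light; the main obstacle is purely the bookkeeping, namely verifying the clean factorization of Equation (\ref{trick2}) and the reduction of Equation (\ref{trick}) to $q$ without arithmetic slips, and treating the boundary case $q(m_1)=0$ carefully so that a root sitting at $m_1$ is not mistaken for an equilibrium — this is precisely where Lemma \ref{extremeM} is needed.
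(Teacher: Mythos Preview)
Your proof is correct and follows essentially the same three-stage structure as the paper: rule out extreme equilibria, force $x_1^{*}=x_2^{*}$ via the factorization of Equation~(\ref{trick2}), then solve the resulting quadratic from Equation~(\ref{trick}) and bound the sum. The only differences are in mechanics---you exclude extreme equilibria via the necessary condition $m_1\le\tfrac{(1-p)^2}{4}m_2$ rather than directly computing $g>0$, and you select the admissible root and derive the sum bound by evaluating the sign of $q$ at $m_1$ and $m_1/2$ rather than by bounding the larger root from below and computing $x_1^{*}+x_2^{*}$ explicitly---which is arguably a bit cleaner but not a different route.
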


\begin{proof}[proof of lemma \ref{symmetric}]

When $m_1=m_2=\frac{m}{k}$, $\partial_{t} g(t, p, \frac{m}{k}, \frac{m}{k}) = 12t^2 + \frac{32m}{k}t + (13 + 10p - 3p^2)\left(\frac{m}{k}\right)^2 > 0$. So $g(t, p, \frac{m}{k}, \frac{m}{k}) > g(0, p, \frac{m}{k}, \frac{m}{k}) = \left(\frac{m}{k}\right)^3(1+p)[4-(1-p)^2] > 0$. By Lemma \ref{extreme0}, there is no extreme Nash equilibrium.

Substituting $m_1=m_2=\frac{m}{k}$ into Equation (\ref{trick2}), we obtain
\begin{equation*}
    \left(\frac{m}{k}\right)^2\left(1+p+\frac{2kt}{m}\right)\left(x_{1}^{*}+x_{2}^{*} + \frac{m}{k}\right)\left(x_{2}^{*} - x_{1}^{*}\right) = 0.
\end{equation*}

Hence we must have $x_{1}^{*} = x_{2}^{*}$. Substituting this and $m_1=m_2=\frac{m}{k}$ into Equation (\ref{trick}), we get

\begin{equation*}
    2(1-p)(x_{2}^{*})^2 - \left(2t+(3-p)\frac{m}{k}\right)x_{2}^{*} + \left(\frac{m}{k}\right)^2 = 0.
\end{equation*}

Note $t = m - m_1 - m_2 = \left(1-\frac{2}{k}\right)m$. Solving this quadratic equation, we get 

\begin{equation*}
    x_{2}^{*}=\frac{m}{4 (1-p) k}(2 k-(1+p)\pm\sqrt{(2k-(p+1))^2 - 8(1-p)}).
\end{equation*}

The positive root is a increasing function of $k$ and achieves its minimum value at $k=2$.

\begin{equation*}
    \frac{m}{4 (1-p) k}(2 k-(1+p)+\sqrt{(2k-(p+1))^2 - 8(1-p)}) \geq \frac{m}{2(1-p)}.
\end{equation*}

This is a feasible solution only when $p=0$ and $x_{1}^{*} = x_{2}^{*} = \frac{m}{2}$, but we know it is not a Nash equilibrium by Lemma \ref{extremeM}. Hence, the only feasible solution is the negative root and this is a non-extreme Nash equilibrium.

\begin{equation*}
    x_{1}^{*} + x_{2}^{*} = \frac{4m/k}{2 k-(1+p) + \sqrt{(2k-(p+1))^2 - 8(1-p)}} \leq \frac{m}{2}.
\end{equation*}

The equal sign is established if and only if $k=2$.
\end{proof}

\subsection{Binary Case}





Next we consider another special case, binary case. Lemma \ref{binary} provides the Nash equilibrium when there is no mining power outside these two pools. $f(y)$ will degenerates into a solvable cubic function when $t=0$, and we can get a concise explicit expression for $(x_1^*, x_2^*)$.

\begin{lemma}
\label{binary}
Let $t=0$, i.e. $m=m_{1}+m_{2}$. If $m_{1} \leq \frac{(1-p)^2m_{2}}{4}$ or $m_{2} \leq \frac{(1-p)^2m_{1}}{4}$, we will get a unique extreme Nash equilibrium with $x_1^{*}=0$ or $x_2^{*}=0$ respectively, as in Lemma \ref{extreme0}. Otherwise we will get a unique non-extreme equilibrium given by

\[
\begin{aligned}
\left(x_{1}^{*}, x_{2}^{*}\right)=&\left(\frac{\sqrt{m_{1} m_{2}}\left(2 \sqrt{m_{1}}-(1-p) \sqrt{m_{2}}\right)}{(1+p)\left(\sqrt{m_{1}}+\sqrt{m_{2}}\right)}\right., \\
&\left.\frac{\sqrt{m_{1} m_{2}}\left(2 \sqrt{m_{2}}-(1-p) \sqrt{m_{1}}\right)}{(1+p)\left(\sqrt{m_{1}}+\sqrt{m_{2}}\right)}\right).
\end{aligned}
\]
$x_1^{*}+x_2^{*} \leq \frac{m_1+m_2}{2}$, and the equality holds if and only if $m_{1}=m_{2}=\frac{m}{2}$.
\end{lemma}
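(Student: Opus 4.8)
The plan is to follow the dichotomy in the statement and, in the ``otherwise'' branch, push everything through the one-variable Equation~(\ref{f=0}) specialised to $t=0$. First I would dispose of the extreme branch. Reading off the constant (in $t$) term of $g$, one has $g(0,p,m_1,m_2)=m_1(m_1+pm_2)\bigl(4m_1-(1-p)^2m_2\bigr)$, so $g(0,p,m_1,m_2)\le 0$ is equivalent to $m_1\le\frac{(1-p)^2}{4}m_2$ (the factors $m_1$ and $m_1+pm_2$ being positive); hence in that case Lemma~\ref{extreme0} applies verbatim and gives the claimed unique equilibrium $(0,x_2^*)$ together with $x_1^*+x_2^*<\frac{m_1+m_2}{2}$. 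The case $m_2\le\frac{(1-p)^2}{4}m_1$ is symmetric, and the two hypotheses are mutually exclusive for $p\in[0,1)$. From here on assume $m_1>\frac{(1-p)^2}{4}m_2$ and $m_2>\frac{(1-p)^2}{4}m_1$, so $g(0,p,m_1,m_2)>0$ and $g(0,p,m_2,m_1)>0$; by Lemma~\ref{extreme0} the game then has no equilibrium with $x_1^*=0$ or $x_2^*=0$, so by Theorem~\ref{existence} and Corollary~\ref{add_a_corollary} a Nash equilibrium exists and any Nash equilibrium lies in $(0,m_1)\times(0,m_2)$ and solves the vanishing-partial-derivative system, equivalently Equations~(\ref{trick})--(\ref{trick2}).

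If $m_1=m_2$ this is exactly Lemma~\ref{symmetric} with $k=2$, where $\sqrt{(2k-(p+1))^2-8(1-p)}=\sqrt{(3-p)^2-8(1-p)}=\sqrt{(1+p)^2}=1+p$ collapses the formula to $x_1^*=x_2^*=\frac m4=\frac{m_1}{2}$, matching the announced expression for $m_1=m_2$ and giving the equality $x_1^*+x_2^*=\frac{m_1+m_2}{2}$. So assume also $m_1\ne m_2$. Substituting $t=0$ into Equation~(\ref{f=0}) kills the quartic term, and dividing by the positive factor $(1+p)^2m_1m_2$ leaves the cubic
$$-(1-p)(y^*)^3+(m_1+m_2)(y^*)^2+(1-p)m_1m_2\,y^*-m_1m_2(m_1+m_2)=0,$$
which factors as
$$\bigl[(m_1+m_2)-(1-p)y^*\bigr]\bigl[(y^*)^2-m_1m_2\bigr]=0 .$$
Its roots are $y^*=\frac{m_1+m_2}{1-p}$ and $y^*=\pm\sqrt{m_1m_2}$; since an interior equilibrium has $y^*=x_1^*+x_2^*\in(0,m_1+m_2)$ and $\frac{m_1+m_2}{1-p}\ge m_1+m_2$ for $p\in[0,1)$, only $y^*=\sqrt{m_1m_2}$ survives.

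Plugging $y^*=\sqrt{m_1m_2}$ into Equation~(\ref{x1,y}) with $t=0$, using $m_2-m_1=(\sqrt{m_2}-\sqrt{m_1})(\sqrt{m_2}+\sqrt{m_1})$ together with the identity $(3-p)\sqrt{m_1m_2}-2m_1-(1-p)m_2=(\sqrt{m_2}-\sqrt{m_1})\bigl(2\sqrt{m_1}-(1-p)\sqrt{m_2}\bigr)$ to cancel the $(\sqrt{m_2}-\sqrt{m_1})$ factor, yields $x_1^*=\frac{\sqrt{m_1m_2}\,(2\sqrt{m_1}-(1-p)\sqrt{m_2})}{(1+p)(\sqrt{m_1}+\sqrt{m_2})}$, and $x_2^*=\sqrt{m_1m_2}-x_1^*$ gives the mirror formula. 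I would then verify this point is genuinely interior: $x_1^*>0\Leftrightarrow 4m_1>(1-p)^2m_2$ and $x_2^*>0\Leftrightarrow 4m_2>(1-p)^2m_1$, which are precisely the standing hypotheses, while $x_1^*<m_1$ and $x_2^*<m_2$ reduce after clearing denominators to $(1-p)\sqrt{m_1m_2}<(1+p)m_1+(1-p)m_2$ (and its mirror), which holds since $(1-p)\sqrt{m_1m_2}\le(1-p)\frac{m_1+m_2}{2}<(1+p)m_1+(1-p)m_2$. By Lemma~\ref{concavity_lemma} an interior zero of $\partial_{x_i}r_i$ is a best response on side $i$, so this point is a Nash equilibrium, and since $y^*$ — hence the whole equilibrium point — was forced, it is the unique one. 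Finally $x_1^*+x_2^*=\sqrt{m_1m_2}\le\frac{m_1+m_2}{2}$ by AM--GM, with equality iff $m_1=m_2$, i.e. (as $t=0$) iff $m_1=m_2=\frac m2$, the symmetric sub-case already treated.

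The only real work is algebraic bookkeeping: confirming that the quartic in Equation~(\ref{f=0}) degenerates to the stated cubic at $t=0$ and that this cubic factors as claimed, and reconciling the rational expression produced by Equation~(\ref{x1,y}) with the compact radical form in the statement. A secondary subtlety worth stating carefully is that the reduction $\{(\ref{trick}),(\ref{trick2})\}\Leftrightarrow\{(\ref{x1,y}),(\ref{f=0})\}$ is legitimate only because $m_1\ne m_2$ — which is exactly why the symmetric case must be peeled off first — and that one still has to confirm the surviving root $y^*=\sqrt{m_1m_2}$ actually delivers a point inside the open box $(0,m_1)\times(0,m_2)$, not merely the value that any interior equilibrium would take.
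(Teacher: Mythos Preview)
Your proposal is correct and follows essentially the same route as the paper: reduce the extreme branch to the sign of $g(0,p,m_1,m_2)$, specialise Equation~(\ref{f=0}) at $t=0$ to a cubic that factors as $(1+p)^2m_1m_2\bigl((y^*)^2-m_1m_2\bigr)\bigl(m_1+m_2-(1-p)y^*\bigr)=0$, extract the sole feasible root $y^*=\sqrt{m_1m_2}$, recover $(x_1^*,x_2^*)$ from Equation~(\ref{x1,y}), and finish with AM--GM. Your write-up is in fact slightly more explicit than the paper's in two places---peeling off $m_1=m_2$ before invoking Equation~(\ref{x1,y}) (whose denominator vanishes there) and spelling out why the computed interior critical point is a Nash equilibrium via Lemma~\ref{concavity_lemma}---but the argument is the same.
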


\begin{proof}[proof of lemma \ref{binary}]

$$g(0, p, m_1, m_2) \le 0 \Longleftrightarrow m_1\leq \frac{(1-p)^2}{4}m_2$$

Thus, $x_1^{*}=0$ when $m_1\leq \frac{(1-p)^2}{4}m_2$ and the value of $x_2^{*}$ will follow Lemma \ref{extreme0}. Symmetrically, we will also get an extreme Nash equilibrium where $x_2^{*}=0$ when $m_2\leq \frac{(1-p)^2}{4}m_1$.

Next, we consider the non-extreme Nash equilibrium. Setting $t=0$ in Equation (\ref{f=0}) , we get

\begin{equation*}
\label{t=0}
(1+p)^2m_1m_2\left((y^{*})^2 - m_1m_2\right)\left(m_1+m_2 - (1-p)y^{*}\right) = 0.
\end{equation*}

This equation has three roots $\{-\sqrt{m_1m_2}, \sqrt{m_1m_2}, \frac{m_1+m_2}{(1-p)}\}$, but only $\sqrt{m_1m_2}$ is a feasible root within $(0, m_1+m_2)$. Substituting $y^{*}=\sqrt{m_1m_2}$ in Equation \ref{x1,y}, we get

\[
x_{1}^{*}=\frac{\sqrt{m_{1} m_{2}})(2 \sqrt{m_{1}}-(1-p)\sqrt{m_{2}})}{(1+p)(\sqrt{m_{1}}+\sqrt{m_{2}})}.
\]
And symmetrically, 

\[
x_{2}^{*}=\frac{\sqrt{m_{1} m_{2}}(2 \sqrt{m_{2}}-(1-p)\sqrt{m_{1}})}{(1+p)(\sqrt{m_{1}}+\sqrt{m_{2}})}.
\]

To be a feasible Nash equilibrium, both $x_1^{*}$ and $x_{2}^{*}$ have be positive. Solving $x_1^{*} > 0$ and $x_{2}^{*}>0$, we get $m_1> \frac{(1-p)^2}{4}m_2$ and $m_2> \frac{(1-p)^2}{4}m_1$. We can verify $x_1^{*} < m_1$ and $x_2^{*} < m_2$ also hold in this condition.

$$x_1^{*}+x_2^{*} = \sqrt{m_1m_2} \leq \frac{m_1+m_2}{2}$$

The equal sign is established if and only if $m_1=m_2=\frac{m}{2}$.

\end{proof}

\subsection{General Case}

Finally, we assume $m_1\neq m_2$, $t>0$ and continue with the general case. $f(y)$ is a quartic function of $y$, which cannot be factorized. Of course we can use root formula to solve it but the result is too complicated to analyze. Instead, we analyze the zero point of $f(y)$ directly. We can determine the existence and range of the zero point by derivative analysis.

\begin{lemma}
\label{f_y}
When $m_1, m_2 > 0$, $m_1 \neq m_2$, $t > 0$ and $p \in [0, 1)$. $f(y)$ has a unique root $y^{*}$ within $(0, m_1+m_2)$ and $y^{*} < \frac{m_1 + m_2}{2}$.
\end{lemma}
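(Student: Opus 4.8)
The plan is to analyze $f$ directly as a real quartic on $[0,m_1+m_2]$, using only the sign of $f$ at two points together with a convexity estimate, rather than wrestling with the explicit root formula. Observe first that every coefficient in Equation (\ref{f=0}) depends on $m_1,m_2$ only through $m_1+m_2$, $m_1m_2$ and $m_1^2+m_2^2$, so $f$ is symmetric under exchanging $m_1$ and $m_2$; hence we may assume $m_1<m_2$ throughout (this is a convenience, not essential).

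\textbf{Step 1 (endpoint and leading sign).} The $y^0$-coefficient of $f$ is $-m_1^2m_2^2\bigl[(1+p)^2(m_1+m_2)+4pt\bigr]$, which is strictly negative for $m_1,m_2,t>0$ and $p\in[0,1)$; hence $f(0)<0$. The leading coefficient $(1-p)^2t$ is strictly positive, so $f(y)\to+\infty$, and $f$ has at least one positive root. The remaining work is to locate and count the roots of $f$ inside $(0,m_1+m_2)$.

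\textbf{Step 2 (value at the midpoint).} I would substitute $y=\frac{m_1+m_2}{2}$ into Equation (\ref{f=0}) and show $f\!\left(\frac{m_1+m_2}{2}\right)>0$. This is a polynomial inequality in $(m_1,m_2,t,p)$, which I expect to prove by organizing the resulting expression as a polynomial in $t$ with manifestly nonnegative coefficients (after invoking $p\in[0,1)$), or equivalently by completing squares in $(m_2-m_1)$ and $t$. Combined with Step 1 and the intermediate value theorem, this already produces a root $y^{*}\in\bigl(0,\frac{m_1+m_2}{2}\bigr)$.

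\textbf{Step 3 (convexity and uniqueness).} To upgrade existence to uniqueness and confine all roots to $\bigl(0,\frac{m_1+m_2}{2}\bigr)$, I would show that $f$ is convex on $[0,m_1+m_2]$, i.e. $f''\ge 0$ there. Since $f''$ is an upward parabola with $f''(0)=2\bigl[4t^3+8(m_1+m_2)t^2+(4m_1^2+4m_2^2+(11+3p^2-2p)m_1m_2)t+(1+p)^2m_1m_2(m_1+m_2)\bigr]>0$, it suffices to bound the interior behaviour of $f''$: check $f''(m_1+m_2)\ge 0$ and control the vertex of $f''$ (equivalently, bound the discriminant of $f''$, or show its minimizer lies outside $[0,m_1+m_2]$). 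Granting convexity on $[0,m_1+m_2]$, the sublevel set $\{y\in[0,m_1+m_2]:f(y)\le 0\}$ is a closed subinterval containing $0$; since $f(0)<0$ while $f\!\left(\frac{m_1+m_2}{2}\right)>0$, this interval is exactly $[0,y^{*}]$ for a single $y^{*}<\frac{m_1+m_2}{2}$, with $f<0$ on $[0,y^{*})$ and $f>0$ on $(y^{*},m_1+m_2]$ (in particular $f(m_1+m_2)>0$). Hence $y^{*}$ is the unique root of $f$ in $(0,m_1+m_2)$ and $y^{*}<\frac{m_1+m_2}{2}$, which is the assertion of the lemma.

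\textbf{Main obstacle.} The whole difficulty is concentrated in the two quantitative sign checks: $f\!\left(\frac{m_1+m_2}{2}\right)>0$ in Step 2 and $f''\ge 0$ on $[0,m_1+m_2]$ in Step 3. The second is the more delicate one, since $f''$ is only a quadratic and could a priori dip below zero in the interior even though it is positive at $y=0$; ruling this out may need a careful vertex/discriminant analysis. Should global convexity fail, the fallback is to split the argument: prove $f'\ge 0$ on $\bigl[\frac{m_1+m_2}{2},m_1+m_2\bigr]$ (so $f$ remains positive there, eliminating any root above the midpoint) and prove convexity only on $\bigl[0,\frac{m_1+m_2}{2}\bigr]$ (so there is exactly one sign change below the midpoint). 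Either route uses the same skeleton — negative at $0$, positive at the midpoint, no further sign change — which simultaneously yields uniqueness and the bound $y^{*}<\frac{m_1+m_2}{2}$.
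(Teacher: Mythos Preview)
Your skeleton is right — check $f(0)<0$, $f\bigl(\tfrac{m_1+m_2}{2}\bigr)>0$, and rule out any further sign change — but the mechanism you propose for the last step does not hold. The global convexity claim in Step 3 is false: the paper computes the two roots $y_1<y_2$ of the upward parabola $f''$ and shows that $y_1\in(0,m_1+m_2)$ while $y_2>m_1+m_2$, so $f''<0$ on $(y_1,m_1+m_2)$ and $f$ is genuinely concave near the right end of the interval. Your two fallbacks fail for the same reason. Take $m_1=1$, $m_2=2$, $t=0.1$, $p=0$: then $y_1\approx 1.16<\tfrac{m_1+m_2}{2}=1.5$, so convexity already breaks on $[0,1.5]$; and $f'(m_1+m_2)=f'(3)\approx -12.5<0$, so $f'\ge 0$ on $[\tfrac{m_1+m_2}{2},m_1+m_2]$ fails too. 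Thus neither the main route nor either fallback can be completed as stated.

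What the paper does instead is one level up: rather than convexity of $f$, it uses \emph{unimodality of $f'$}. Because $y_1\in(0,m_1+m_2)$ and $y_2>m_1+m_2$, $f'$ increases on $(0,y_1)$ and decreases on $(y_1,m_1+m_2)$; hence the set $\{f'>0\}$ is a single subinterval. The extra sign check you are missing is $f'\bigl(\tfrac{m_1+m_2}{2}\bigr)>0$, which pins the midpoint inside that subinterval. Combined with your $f(0)<0$ and $f\bigl(\tfrac{m_1+m_2}{2}\bigr)>0$ and the additional endpoint value $f(m_1+m_2)>0$ (which you also need), a short case analysis on the possible shapes of $f$ (down--up, up--down, or down--up--down, with the ``up'' piece always containing $\tfrac{m_1+m_2}{2}$) gives exactly one root, necessarily in $\bigl(0,\tfrac{m_1+m_2}{2}\bigr)$. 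So the fix is: replace ``$f''\ge 0$ on $[0,m_1+m_2]$'' by ``$f''$ has exactly one root in $(0,m_1+m_2)$'', add the checks $f'\bigl(\tfrac{m_1+m_2}{2}\bigr)>0$ and $f(m_1+m_2)>0$, and run the monotonicity argument on $f'$ rather than on $f''$.
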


\begin{proof}[proof of lemma \ref{f_y}]
We calculate the first and second derivatives of $f(y)$.

\begin{equation*}
\begin{small}
\begin{array}{rl}
f'(y) =& 4(1-p)^2ty^{3} \\
&- 3(1-p)\left[(1+p)^2m_1m_2 + 4(m_1+m_2)t + 4t^2\right]y^{2}  \\
&+ 2\left[4t^3 + 8(m_1+m_2)t^2\right. \\
&\quad + (4m_1^2 + 4m_2^2 + 11m_1m_2+ 3p^2m_1m_2-2pm_1m_2)t \\
&\quad + \left.(1+p)^2m_1m_2(m_1+m_2)\right]y \\
&+(1-p)m_1m_2\left[(1+p)^2m_1m_2 - 4(m_1+m_2)t - 4t^2\right]
\end{array}
\end{small}
\end{equation*}

\begin{equation*}
\label{f}
\begin{small}
\begin{array}{rl}
f''(y) =& 12(1-p)^2ty^{2} \\
&- 6(1-p)\left[(1+p)^2m_1m_2 + 4(m_1+m_2)t + 4t^2\right]y  \\
&+ 2\left[4t^3 + 8(m_1+m_2)t^2\right. \\
&\quad + (4m_1^2 + 4m_2^2 + 11m_1m_2+ 3p^2m_1m_2-2pm_1m_2)t \\
&\quad + \left.(1+p)^2m_1m_2(m_1+m_2)\right]
\end{array}
\end{small}
\end{equation*}

Solving $f''(y) = 0$, we get

\begin{equation*}
\begin{aligned}
y_1 =& \frac{(1+p)^2m_1m_2}{4(1-p)t} + \frac{m_1 + m_2 + t}{1-p} - \frac{\sqrt{\Delta_1}}{12t(1-p)},\\
y_2 =& \frac{(1+p)^2m_1m_2}{4(1-p)t} + \frac{m_1 + m_2 + t}{1-p} + \frac{\sqrt{\Delta_1}}{12t(1-p)},
\end{aligned}
\end{equation*}

where $\Delta_1 = 48t^4 + 96(m_1+m_2)t^3 + 48[(m_1+m_2)^2+4pm_1m_2]t^2 + 48(1+p)^2m_1m_2(m_1+m_2)t + 9(1+p)^4m_1^2m_2^2 > 0$. We can verify that $y_1 \in (0, m_1 + m_2)$ but $ y_2 > m_1 + m_2$. So $f'(y)$ increases on $(0, y_1)$ and decreases on $(y_1, m_1 + m_2)$. Considering the middle point, we have


\begin{equation*}
f'(\frac{m_1+m_2}{2}) > 0.
\end{equation*}

Although we cannnot determine the sign of $f'(0)$ or that of $f'(m_1+m_2)$, we can now claim that there is only one continuous interval containing point $\frac{m_1+m_2}{2}$ where $f'(y)$ is positive, while $f'(y)$ can probably be negative near $0$ and $m_1+m_2$.

Back to the original function $f(y)$, we have


\begin{equation*}
f(\frac{m_1+m_2}{2}) > 0.
\end{equation*}

We know there is only one continuous interval containing point $\frac{m_1+m_2}{2}$ where $f(y)$ increases. Although $f(y)$ may decrease near $0$ and $m_1+m_2$, it does not matter. We can easily verify $f(0) < 0$ and $f(m_1+m_2) > 0$, so there exists a unique root for $f(y)$ within $(0, m_1+m_2)$ and it is stricly less than $\frac{m_1+m_2}{2}$.

\end{proof}

$f(y^*)=0$ is a necessary condition for that the original equation system of vanishing partial derivatives has a solution within $(0, m_1)\times(0, m_2)$. It is insufficient because the corresponding $(x_1^*, x_2^*)$ may be out of range. Lemma \ref{symmetric}, \ref{binary} and \ref{f_y} ensure that given $m_1, m_2, t, p$, there is at most one non-extreme Nash equilibrium. If there does exist a non-extreme Nash equilibrium $(x_1^{*}, x_2^{*})$, $x_1^{*}+x_2^{*} \leq \frac{m_1+m_2}{2}$. and equals
to $\frac{m_1+m_2}{2}$ if and only if $m_{1}=m_{2}=\frac{m}{2}$.


\section{Uniqueness of Nash Equilibrium}
\label{uniqueness_section}

We have already studied three conditions illustrated in Corollary \ref{add_a_corollary}. Condition (a) corresponds to a unique extreme Nash equilibrium, so does condition (b). Condition (c) also corresponds to at most one non-extreme Nash equilibrium. It is natural to wonder whether this game has a unique pure Nash equilibrium. The following theorem guarantees the uniqueness.

\begin{theorem}
\label{uniqueness}
Every two-player miner’s dilemma game with betrayal assumption admits a unique pure Nash equilibrium.
\end{theorem}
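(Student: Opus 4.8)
The plan is to assemble uniqueness from the classification already obtained. By Corollary \ref{add_a_corollary} every pure Nash equilibrium $(x_1^*,x_2^*)$ is of exactly one of three types: \textbf{(a)} $x_1^*=0$; \textbf{(b)} $x_2^*=0$; \textbf{(c)} both coordinates interior, solving the vanishing partial derivatives. Lemma \ref{extreme0} shows there is at most one equilibrium of type (a), and that it exists iff $g(t,p,m_1,m_2)\le 0$ (equivalently $m_1\le\frac{(1-p)^2}{4}m_2$ and $t\in[0,t^*]$); by symmetry, at most one of type (b), existing iff $g(t,p,m_2,m_1)\le 0$. Lemmas \ref{symmetric}, \ref{binary} and \ref{f_y} together show there is at most one equilibrium of type (c), covering $m_1=m_2$, $t=0$, and $m_1\ne m_2\land t>0$ respectively. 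Since Theorem \ref{existence} already guarantees at least one equilibrium, it suffices to prove that no two of the three types can be realized for the same parameters $(t,p,m_1,m_2)$.

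Types (a) and (b) exclude each other: if $g(t,p,m_1,m_2)\le 0$ and $g(t,p,m_2,m_1)\le 0$, the equivalent conditions give $m_1\le\frac{(1-p)^2}{4}m_2$ and $m_2\le\frac{(1-p)^2}{4}m_1$, and multiplying and cancelling $m_1m_2>0$ yields $(1-p)^4\ge 16$, impossible for $p\in[0,1)$. (In particular $(0,0)$ is never an equilibrium, since a type-(a) equilibrium has $x_2^*>0$ by the formula of Lemma \ref{extreme0}.) If \emph{neither} extreme type is realized, then no extreme equilibrium exists and, by Theorem \ref{existence}, the unique equilibrium is the interior one produced by Lemma \ref{symmetric}, \ref{binary}, or \ref{f_y} as appropriate. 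There remains the case where, say, $g(t,p,m_1,m_2)\le 0$: a type-(a) equilibrium exists and we must exclude a coexisting type-(c) equilibrium. Here $g(t,p,m_1,m_2)\le 0$ forces $m_1<m_2$, so the symmetric subcase $m_1=m_2$ cannot occur, and the subcase $t=0$ is precisely the feasibility computation already carried out in the proof of Lemma \ref{binary}; thus we may assume $m_1\ne m_2$ and $t>0$.

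The heart of the argument is this last exclusion. Because $m_1<m_2$, the denominator of Equation (\ref{x1,y}) is positive, so the interior candidate attached to the unique root $y^*$ of $f$ (Lemma \ref{f_y}) has $x_1^*\le 0$, hence is infeasible as a type-(c) equilibrium, if and only if $q(y^*)\le 0$, where $q(y)=(1-p)y^2-(2m_1+(1-p)m_2+2t)y+2m_1m_2$ is the numerator of (\ref{x1,y}) expressed in $y=x_1^*+x_2^*$. I would prove the sharper equivalence $q(y^*)\le 0\iff g(t,p,m_1,m_2)\le 0$. At the endpoint $g(t,p,m_1,m_2)=0$ one has $\partial_{x_1}r_1(0,x_2^*)=0$ in addition to $\partial_{x_2}r_2(0,x_2^*)=0$, so the extreme point $(0,x_2^*)$ itself solves the interior system; by uniqueness of the root in Lemma \ref{f_y} this forces $y^*=x_2^*$ and $q(y^*)=0$. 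The other endpoint $t=0$ is the computation in Lemma \ref{binary}. Between them, the sign of $q(y^*)$ (equivalently of $x_1^*$) is pinned down using $f(y^*)=0$, the bound $y^*<\frac{m_1+m_2}{2}$ of Lemma \ref{f_y}, and the constraint $q(x_2^*)\le 0$ obtained by evaluating Equation (\ref{trick}) at the extreme equilibrium; this reduces the claim to a one-parameter sign verification that can be discharged by symbolic elimination. This elimination is the step I expect to be the main obstacle. The exclusion of type (c) when a type-(b) equilibrium exists is symmetric. Assembling the cases: for every $(t,p,m_1,m_2)$ exactly one of the three types is realizable and it carries exactly one equilibrium, so the pure Nash equilibrium is unique.
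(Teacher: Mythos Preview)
Your overall structure matches the paper's exactly: reduce to Corollary \ref{add_a_corollary}, use the uniqueness within each type already established, and then show the three types are mutually exclusive. Your handling of (a) vs.\ (b) via multiplying the two inequalities $m_1\le\tfrac{(1-p)^2}{4}m_2$ and $m_2\le\tfrac{(1-p)^2}{4}m_1$ is clean and slightly more explicit than the paper, which simply notes $(0,0)$ is not an equilibrium. You are also more careful than the paper in separating off the special subcases $m_1=m_2$ and $t=0$ before invoking Lemma \ref{f_y}.

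The one genuine difference is in the execution of the crucial exclusion (a) vs.\ (c). You propose to show $q(y^*)\le 0\iff g(t,p,m_1,m_2)\le 0$ via a boundary-matching/continuity argument, and you flag the final ``one-parameter sign verification'' as the expected obstacle. The paper dispatches this step far more directly and avoids any continuity argument: since $q$ is a convex quadratic with roots $y_3<y_4$, feasibility $x_1^*>0$ (i.e.\ $q(y^*)>0$) means $y^*<y_3$ or $y^*>y_4$; the bound $y^*<\tfrac{m_1+m_2}{2}<y_4$ from Lemma \ref{f_y} forces $y^*<y_3$. Because $f$ has a unique root in $(0,m_1+m_2)$ with $f(0)<0$, this is equivalent to $f(y_3)>0$. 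One then substitutes the explicit expression for $y_3$ into $f$ and finds by symbolic computation that $f(y_3)>0$ is \emph{exactly} $g(t,p,m_1,m_2)>0$, giving the contradiction. So the ``elimination'' you anticipate is simply evaluating $f$ at the smaller root of $q$; no endpoint interpolation is needed.
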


\begin{proof}[proof of theorem \ref{uniqueness}]

Theorem \ref{existence} guarantees the existence of pure Nash equilibrium. To prove uniqueness, we need to prove at most one condition can be satisfied in Corollary \ref{add_a_corollary}. Since, $(0, 0)$ cannot be a Nash equilibrium by Lemma \ref{extreme0}, we should prove (a) and (c) cannot happen at the same time, as well as (b) and (c). W.L.O.G, we prove there cannot be two Nash equilibria that satisfy (a) and (c) respectively 
at the same time, and a symmetric analysis will hold for (b) and (c).

If $x_1^{*} = 0$ in one Nash equilibrium, we should have $g(t, p, m_1, m_2) \leq 0$ and $m_1 \leq \frac{(1-p)^2}{4}m_2 < m_2 $.

If there exists another non-extreme Nash equilibrium, the unique root $y^{*}$ for $f(y)=0$ should correspond to a feasible solution $(x_1^{*}, x_2^{*})$ within $(0, m_1)\times(0, m_2)$. Hence we must have $x_1^{*} > 0$. Solving this in Equation (\ref{x1,y}) we get $y^{*} < y_3$ or $y^{*} > y_4$, where
\begin{equation*}
\begin{aligned}
y_3 =& \frac{2m_1 + (1-p)m_2 + 2t - \sqrt{\Delta_2}}{2(1-p)},\\
y_4 =& \frac{2m_1 + (1-p)m_2 + 2t + \sqrt{\Delta_2}}{2(1-p)},
\end{aligned}
\end{equation*}
and $\Delta_2 = (2m_1 - (1-p)m_2)^2 + 4t^2 + 4t(2m_1 + (1-p)m_2) > 0$. 

$y_4 > \frac{2m_1 + (1-p)m_2 + 2t}{2(1-p)} > \frac{m_1 + m_2}{2}$, but we have proved $y^{*} \leq \frac{m_1 + m_2}{2}$, so only $y^{*} < y_3$ can happen. Referring to the proof of Lemma \ref{f_y}, if condition (c) corresponds to a solution, it means $f(y_3) > 0$. 

However, by substituting $y_3$ and solving $f(y_3) > 0$, we get exactly$$g(t, p, m_1, m_2) > 0, $$contradicting to $g(t, p, m_1, m_2) \leq 0$ stated before. Therefore, condition (a) and (c) contradict with each other.

\end{proof}


\citet{alkalay2019pure} points out another direction for establishing the uniqueness of pure Nash equilibrium is to leverage the result by \citet{rosen1965existence}, and show that the sufficient conditions they provide are satisfied in our game. Unfortunately, by our numerical experiments, the conditions required by Rosen Criterion are not satisfied in this game, even when $p=0$.

Note we have already fully depicted the unique Nash equilibrium in this game. Given $m_1, m_2, t$ and $p$, we determine the equilibrium point as follows. First we check the sign of function $g$ to determine whether it is an extreme case or not. If it is an extreme case, the Nash equilibrium follows Lemma \ref{extreme0}. If $m_1=m_2$, we get a symmetric Nash equilibrium by Lemma \ref{symmetric}. Otherwise, we solve Equation (\ref{f=0}) to get the root $y^{*}$ of function $f$, and then we can get a non-extreme equilibrium $(x_1^{*}, x_2^{*})$.

\section{Pure Price of Anarchy}
\label{ppoa_section}

Now we can prove our most significant result, which is about the pure price of anarchy of this game. A pure price of anarchy ($\mathrm{PPoA}$) is a measure of the ratio between the optimal social welfare and the worst social welfare in any pure Nash equilibrium. Following the analysis in \citet{alkalay2019pure}, we define the social welfare to be effective mining power of these two pools. The optimal mining power is $m_1 + m_2$ when no one attacks and the effective mining power is $m_1+m_2-(1-p)(x_{1}^{*}+x_{2}^{*})$ in a pure Nash equilibrium. Note by Theorem \ref{uniqueness}, we know the pure Nash equilibrium is unique, so given $m_1, m_2, t$ and $p$, the pure price of anarchy is

$$\mathrm{PPoA}=\frac{m_{1}+m_{2}}{m_1+m_2-(1-p)(x_{1}^{*}+x_{2}^{*})}.$$


The following theorem gives the tight bound of $\mathrm{PPoA}$.

\begin{theorem}
\label{main}
In every two-player miner’s dilemma game with betrayal assumption, the pure price of anarchy is at most $2$, and equal to $2$ if and only if $m_{1}=m_{2}=\frac{m}{2}$ and $p=0$. The tight bound of $\mathrm{PPoA}$ is $(1, 2]$.
\end{theorem}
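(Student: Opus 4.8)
The plan is to read the bound straight off the closed form $\mathrm{PPoA}=\frac{m_1+m_2}{m_1+m_2-(1-p)(x_1^{*}+x_2^{*})}$ (well defined by the uniqueness established in Theorem \ref{uniqueness}) together with the uniform estimate $x_1^{*}+x_2^{*}\le\frac{m_1+m_2}{2}$ that has already been proved in every regime of the equilibrium: the extreme case (Lemma \ref{extreme0}), the symmetric case (Lemma \ref{symmetric}), the binary case (Lemma \ref{binary}), and the general non-extreme case (Lemma \ref{f_y}), these being exhaustive by Corollary \ref{add_a_corollary}. For the upper bound, since $p\in[0,1)$ we have $0<1-p\le 1$, hence $(1-p)(x_1^{*}+x_2^{*})\le x_1^{*}+x_2^{*}\le\frac{m_1+m_2}{2}$, so the denominator is at least $\frac{m_1+m_2}{2}>0$ and therefore $\mathrm{PPoA}\le 2$.

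For the equality characterization, I would note that $\mathrm{PPoA}=2$ forces $(1-p)(x_1^{*}+x_2^{*})=\frac{m_1+m_2}{2}$, which, given $1-p\le 1$ and $x_1^{*}+x_2^{*}\le\frac{m_1+m_2}{2}$, is possible only if \emph{simultaneously} $p=0$ and $x_1^{*}+x_2^{*}=\frac{m_1+m_2}{2}$. Lemma \ref{extreme0} and Lemma \ref{f_y} assert the \emph{strict} inequality $x_1^{*}+x_2^{*}<\frac{m_1+m_2}{2}$ in the extreme case and in the general non-extreme case ($m_1\neq m_2$, $t>0$), so equality is confined to the symmetric/binary regime, where Lemmas \ref{symmetric} and \ref{binary} state it holds iff $m_1=m_2=\frac{m}{2}$. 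Conversely, when $m_1=m_2=\frac{m}{2}$ and $p=0$, Lemma \ref{symmetric} with $k=2$ (the discriminant collapsing, $\sqrt{(2k-(p+1))^2-8(1-p)}=1+p$) gives $x_1^{*}=x_2^{*}=\frac{m}{4}$, hence $\mathrm{PPoA}=\frac{m}{m-m/2}=2$. This establishes the ``at most $2$, with equality iff $m_1=m_2=\frac m2$ and $p=0$'' claim.

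For the tight interval $(1,2]$, the strict lower bound follows because $(0,0)$ is never an equilibrium: if $x_1^{*}=0$ then Lemma \ref{extreme0} gives a positive explicit value for $x_2^{*}$, so $x_1^{*}+x_2^{*}>0$, and since $1-p>0$ the denominator is strictly below $m_1+m_2$, i.e. $\mathrm{PPoA}>1$. To see that every value of $(1,2]$ is actually attained, I would exhibit the one-parameter family $m_1=m_2=\frac{m}{2}$ (so $t=0$) with $p$ ranging over $[0,1)$: by the computation above $x_1^{*}+x_2^{*}=\frac{m}{2}$ for every such $p$, so $\mathrm{PPoA}=\frac{2}{1+p}$, which is continuous and strictly decreasing on $[0,1)$ with image exactly $(1,2]$. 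Hence $(1,2]$ is the tight bound.

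Since all the quantitative work is done in the earlier lemmas, the only genuinely delicate point is the equality analysis: one must combine the strictness of the inequality in the extreme and general non-extreme regimes (to localize equality to $m_1=m_2=\frac m2$) with the \emph{separate} necessity of $p=0$ coming from the factor $1-p$, and keep the two constraints distinct rather than conflated. Once that is handled, realization of the whole interval by the symmetric no-outsider family is immediate, so no real obstacle remains.
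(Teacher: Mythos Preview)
Your proof is correct and follows essentially the same line as the paper: combine the estimate $x_1^{*}+x_2^{*}\le\frac{m_1+m_2}{2}$ from Lemmas \ref{extreme0}, \ref{symmetric}, \ref{binary}, \ref{f_y} to get $\mathrm{PPoA}\le 2$, pin down equality via the strictness in the extreme and general non-extreme regimes, and rule out $\mathrm{PPoA}=1$ because $(0,0)$ is never an equilibrium. The only minor divergence is in exhibiting tightness near $1$: the paper sends $k\to\infty$ in the symmetric case (with fixed $p$), whereas you keep $k=2$ and let $p\to 1^{-}$, which has the pleasant side effect of realizing the entire interval $(1,2]$ via the single family $\mathrm{PPoA}=\frac{2}{1+p}$.
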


\begin{proof}[proof of theorem \ref{main}]


By Lemma \ref{extreme0}, \ref{symmetric}, \ref{binary} and \ref{f_y}, we know $x_1^{*} + x_2^{*} = y^{*} \leq \frac{m_1+m_2}{2}$, so that $\mathrm{PPoA}\leq \frac{2}{1+p}$. The equality holds if and only if $x_1^{*} + x_2^{*} = y^{*} \leq \frac{m_1+m_2}{2}$, i.e., $m_{1}=m_{2}=\frac{m}{2}$. Then for $p\in [0, 1)$, $\mathrm{PPoA}\leq 2$, and equal to $2$ if and only if $p=0$.

The lower bound of $\mathrm{PPoA}$ is trivial. Since $(0, 0)$ cannot be a Nash equilibrium by Lemma \ref{extreme0}, $\mathrm{PPoA} > 1$ comes naturally. And taking the symmetric case as an example, by Lemma \ref{symmetric}, we have

\begin{equation*}
   \mathrm{PPoA} = \sqrt{\frac{(1+p)^2}{16(k-2)^2} + \frac{4(3-p)}{16(k-2)} + \frac{1}{4}} + \frac{1}{2} - \frac{p+1}{4(k-2)} \stackrel{k \to +\infty}{\longrightarrow} 1
\end{equation*}

\end{proof}

For every two-player miner’s dilemma game without betrayal assumption, \citet{alkalay2019pure} proved $\mathrm{PPoA} < 3$ in general case and gave the conjecture that pure Nash equilibrium is unique and $\mathrm{PPoA} \le 2$. We not only show the conjecture is true, but also consider the conjecture in a more general model, the game with betrayal assumption. Theorem \ref{uniqueness} proves the uniqueness and Theorem \ref{main} proves $\mathrm{PPoA} \le 2$, which is a tight upper bound.

If we keep $p$ in the upper bound, $\mathrm{PPoA} \le \frac{2}{1+p}$. While the upper bound of $\mathrm{PPoA}$ decreases in $p$, it is an interesting discovery that betrayal has no effect on the upper bound of the sum of infiltrating power, i.e., $x_{1}^{*}+x_{2}^{*} \le \frac{m_1+m_2}{2}$ holds tightly regardless of the value of $p$. In other words, Allowing miners to betray has no substantial impact on the macro strategy of the mining pool, but can reduce the systemic loss.

\section{Experiments}
\label{experiment_section}

Since the dimension of strategy profile space increases quadratically w.r.t. the number of players, the problem with more than two pools is much more complicated to analyze theoretically. Having thoroughly studied the case of two players, we focus on the $\mathrm{PPoA}$ of N-player miner's dilemma game and set $N=3$ in our experiments. Let $m_0$, $m_1$, $m_2$ denote the mining power of three pools, $t$ other mining power in the system, $p$ the betrayal parameter. $m_0, m_1, m_2, t$ are sampled as integers while the values of $p$ are equally spaced floating points in $[0, 1)$.


Let $x_{i,j}$ be the amount of mining power pool $i$ uses to attack pool $j$, satisfying $\sum_{j} x_{i,j} \leq m_i$, for $\forall i \in \{0,1,2\}$. For each setting, we iteratively calculate the best responses starting from honest mining, i.e., $x_{i,j}^0=0, \forall i,j \in \{0, 1, 2\}$. Specifically, in step $k$, pool $i$ chooses strategies $\{x^k_{i,j}: i\not=j \}$ that maximizes its reward based on others' strategies in step $k-1$: $\{x_{i^\prime,j}^{k-1}:i^\prime \neq j, i^\prime \neq i\}$. We assume the iteration converges to a Nash 
equilibrium when $\sum_{i, j: i\not=j}|x_{i,j}^{k} - x_{i,j}^{k-1}| \leq 2^{-18}$. The optimization process uses \textit{scipy} package in python.

\begin{figure}[!htbp]
\centering
\subfigure[]{
\label{fig_1}
\includegraphics[width=0.45\linewidth]{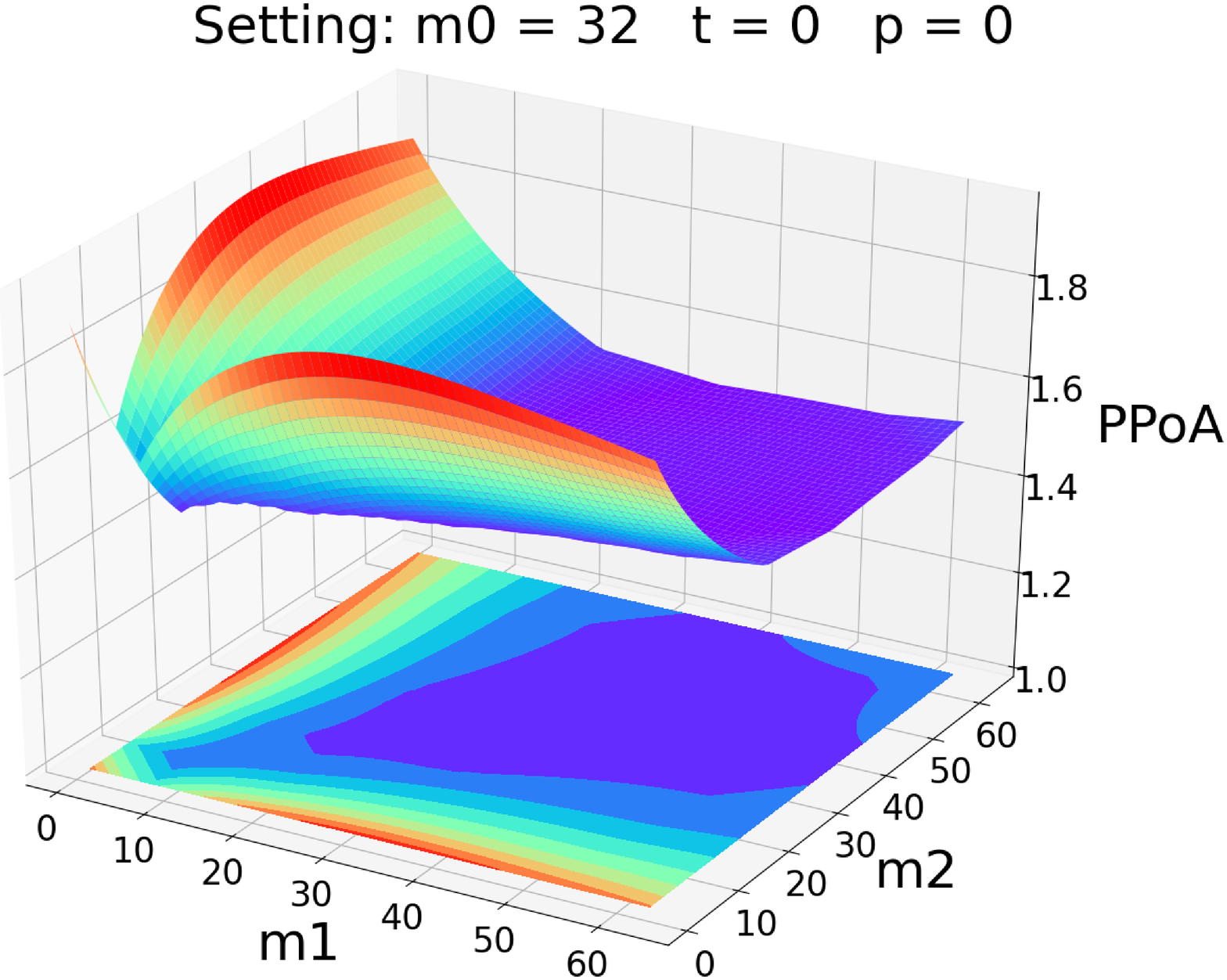}
}
\subfigure[]{
\label{fig_2}
\includegraphics[width=0.45\linewidth]{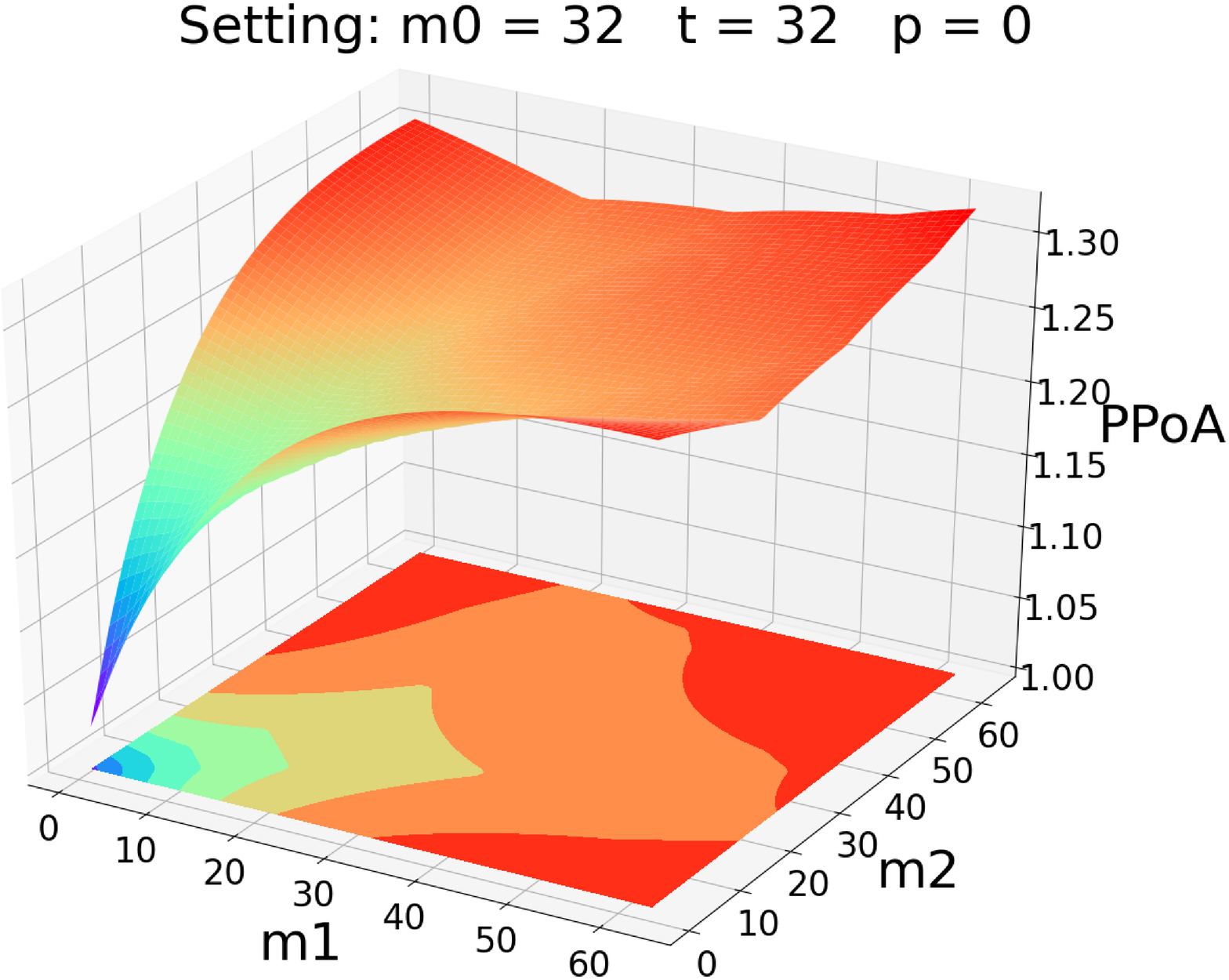}
}
\subfigure[]{
\label{fig_3}
\includegraphics[width=0.45\linewidth]{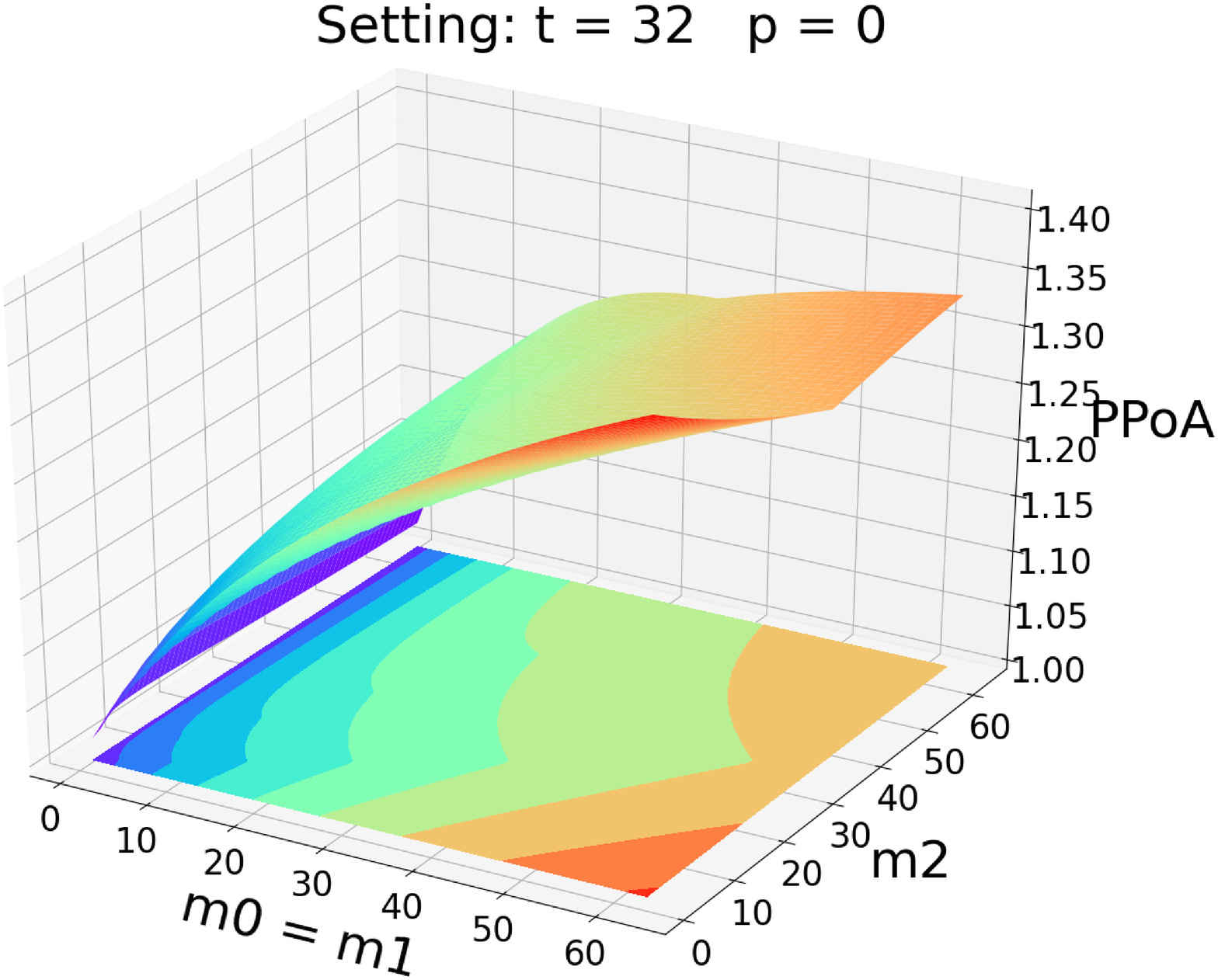}
}
\subfigure[]{
\label{fig_4}
\includegraphics[width=0.45\linewidth]{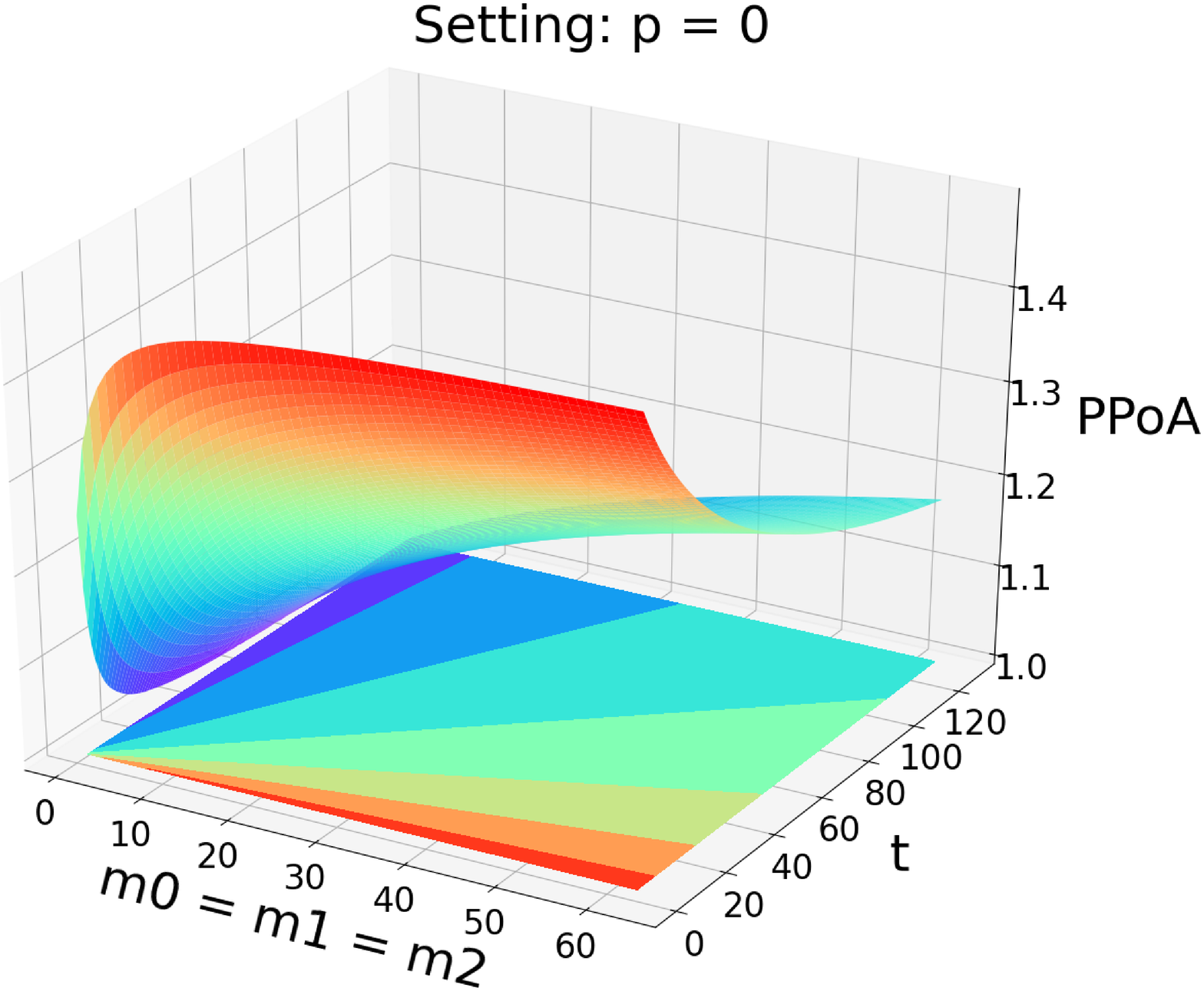}
}
\subfigure[]{
\label{fig_5}
\includegraphics[width=0.45\linewidth]{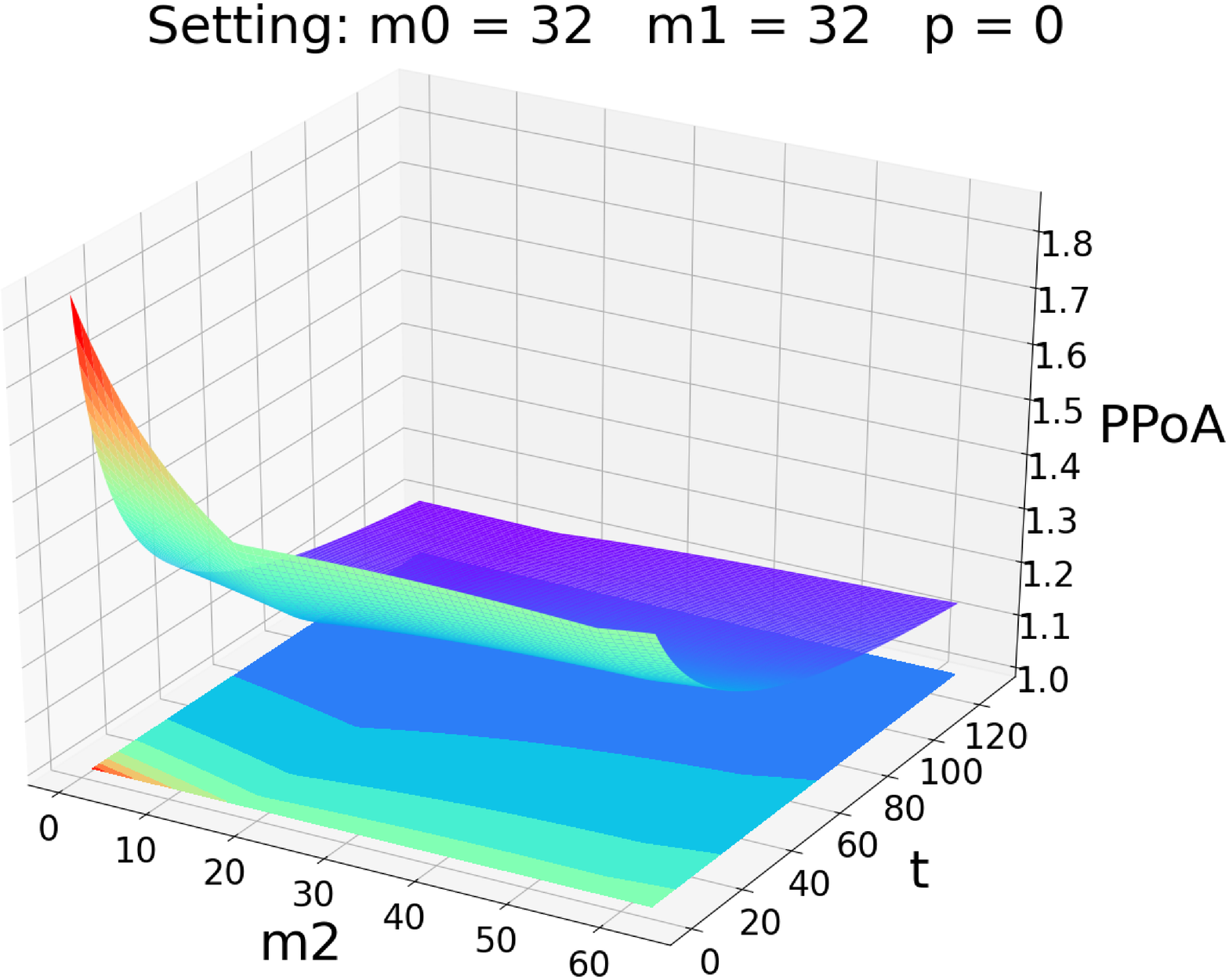}
}
\subfigure[]{
\label{fig_6}
\includegraphics[width=0.45\linewidth]{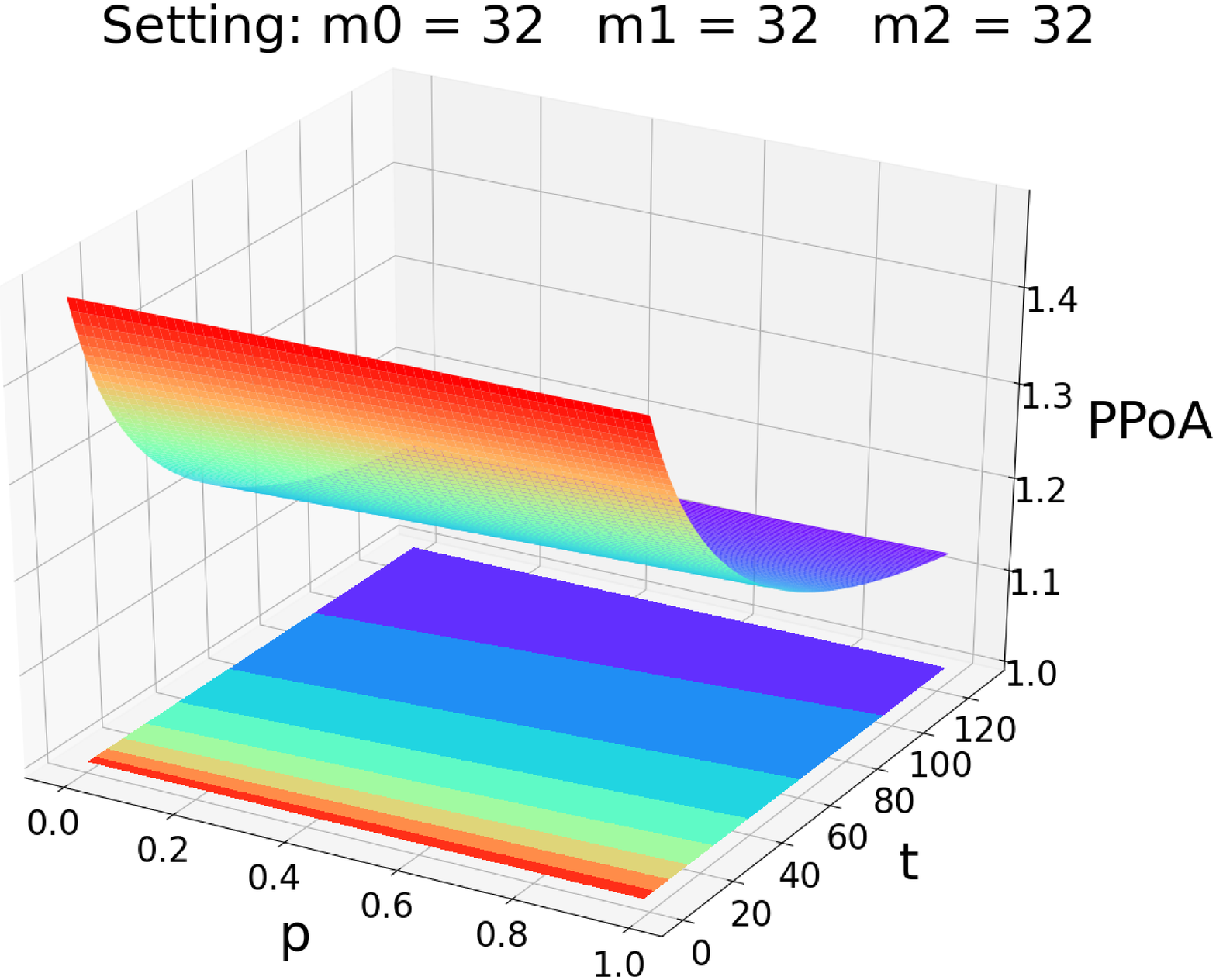}
}
\caption{Pure price of anarchy in 3-player miner’s dilemma game}
\label{fig}
\end{figure}

There are five parameters varying in this game, so we fix part of them to see how $\mathrm{PPoA}$ changes with other parameters, in the hope of catching its inner structure. In Figure \ref{fig_1} and \ref{fig_2}, we set $m_0=2^5, p=0$ and fix $t$ to a constant value ($t=0$ in \ref{fig_1}, $t=2^5$ in \ref{fig_2}). In Figure \ref{fig_3}, we set $m_0=m_1$ and let them change together. Then, we consider the symmetric case when $m_0 = m_1 = m_2$ and how $\mathrm{PPoA}$ changes w.r.t. $m_i$ and $t$, shown in Figure \ref{fig_4}.
In Figure \ref{fig_5}, we fix $m_0=m_1=2^5$. And finally we fix $m_0=m_1=m_2 = 2^5$ to see the effect of $p$ on $\mathrm{PPoA}$, as shown in Figure \ref{fig_6}.


Figure \ref{fig_4},\ref{fig_5}, \ref{fig_6} imply that $\mathrm{PPoA}$ will become smaller as t increases. That is to say, when this game involves a higher percentage of the mining power in the system, pools tend to use more mining power to perform attacks and the loss will become larger. It is also interesting to see with different $t$, Figure \ref{fig_1} and \ref{fig_2} show different trends of $\mathrm{PPoA}$ as $m_1$ and $m_2$ change. 

We can see in most instances, $\mathrm{PPoA} \leq \frac{3}{2}$, which means $\sum_{i, j} x_{i, j} \leq \frac{m_1+m_2+m_3}{3}$. $\mathrm{PPoA}$ will exceed $\frac{3}{2}$ only if the mining power of one pool approaches zero. In fact, 3-player miner's dilemma game degenerates to 2-player game in this case. Although our experiment only covers a few special settings, it can be intuitively judged that decentralization can help reduce the mining power wasted in the game. We conjecture that the following stronger result should hold.

\begin{conjecture}
\label{conj}
Every $N$-player ($N \geq 2$) miner’s dilemma game with betrayal assumption admits a unique pure Nash equilibrium, and the tight bound of pure price of anarchy is $(1, 2]$.
\end{conjecture}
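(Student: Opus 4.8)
The plan is to reduce everything to the single inequality $x_1^{*}+x_2^{*}\le\frac{m_1+m_2}{2}$ for the unique Nash equilibrium, which the earlier sections have already established case by case. First I would invoke Theorem \ref{uniqueness} so that $\mathrm{PPoA}$ is well defined as $\frac{m_1+m_2}{m_1+m_2-(1-p)(x_1^{*}+x_2^{*})}$. Then I would split into the three possibilities of Corollary \ref{add_a_corollary}: an extreme equilibrium with $x_1^{*}=0$ or $x_2^{*}=0$ (handled by Lemma \ref{extreme0}, which even gives the strict inequality $x_1^{*}+x_2^{*}<\frac{m_1+m_2}{2}$), and a non-extreme equilibrium, which is handled by Lemma \ref{symmetric} in the symmetric case $m_1=m_2$, by Lemma \ref{binary} in the binary case $t=0$, and by Lemma \ref{f_y} in the remaining general case $m_1\neq m_2$, $t>0$. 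In each situation $x_1^{*}+x_2^{*}\le\frac{m_1+m_2}{2}$, with equality possible only for the non-extreme symmetric equilibrium and only when $m_1=m_2=\frac{m}{2}$.

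Second, from $x_1^{*}+x_2^{*}\le\frac{m_1+m_2}{2}$ I would get, since $1-p\in(0,1]$,
\[
\mathrm{PPoA}=\frac{m_1+m_2}{m_1+m_2-(1-p)(x_1^{*}+x_2^{*})}\le\frac{m_1+m_2}{m_1+m_2-\frac{1-p}{2}(m_1+m_2)}=\frac{2}{1+p}\le 2.
\]
The first inequality is tight exactly when $x_1^{*}+x_2^{*}=\frac{m_1+m_2}{2}$, i.e.\ when $m_1=m_2=\frac{m}{2}$; the second is tight exactly when $p=0$. Hence $\mathrm{PPoA}=2$ if and only if $m_1=m_2=\frac{m}{2}$ and $p=0$, which settles the upper-bound half and the equality characterization.

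Third, for the lower end of the range I would note that $(0,0)$ is never a Nash equilibrium (Lemma \ref{extreme0} shows the equilibrium always has a positive coordinate, and $x_1^{*}=x_2^{*}=0$ is excluded), so $(1-p)(x_1^{*}+x_2^{*})>0$ and $\mathrm{PPoA}>1$ always. To see that no value in $(1,2]$ is asymptotically excluded, I would exhibit the symmetric family $m_1=m_2=\frac{m}{k}$, $p=0$: substituting the closed-form equilibrium of Lemma \ref{symmetric} into the definition gives an explicit expression in $k$ that tends to $1$ as $k\to\infty$, so $\mathrm{PPoA}$ takes values arbitrarily close to $1$; together with the attained value $2$ this yields the tight bound $(1,2]$.

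Since essentially all the analytic work lives in Lemmas \ref{extreme0}, \ref{symmetric}, \ref{binary} and \ref{f_y}, the only real care needed here is bookkeeping: checking that these four lemmas together cover every $(m_1,m_2,t,p)$ with $p\in[0,1)$, and that the equality case is pinned down consistently across them (in particular that an extreme equilibrium can never give $x_1^{*}+x_2^{*}=\frac{m_1+m_2}{2}$, which is where the \emph{strict} inequality in Lemma \ref{extreme0} is used). I expect this case-coverage check, rather than any new computation, to be the main point to get right.
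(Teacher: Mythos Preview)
Your write-up is a correct and essentially verbatim reconstruction of the paper's proof of Theorem~\ref{main}, the \emph{two-player} result. But the statement you were asked to prove is Conjecture~\ref{conj}, which concerns the $N$-player game for all $N\ge 2$. The paper does not prove this conjecture; it is explicitly left open, supported only by the $N=2$ theorems and the $N=3$ numerical experiments of Section~\ref{experiment_section}. Your proposal never leaves the two-pool setting: every lemma you invoke (Lemmas~\ref{extreme0}, \ref{symmetric}, \ref{binary}, \ref{f_y}), the reward formula, the single sum $x_1^{*}+x_2^{*}$, and the uniqueness Theorem~\ref{uniqueness} are all specific to two players. Nothing you wrote addresses the existence or uniqueness of equilibrium for $N\ge 3$, nor the bound $\sum_{i\ne j}x_{i,j}^{*}\le\frac{1}{2}\sum_i m_i$ that would be needed there.

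So the gap is not a technical slip but a scope mismatch: what you have is a proof of the $N=2$ special case (which the paper already states as Theorem~\ref{main} with the same argument), not a proof of the conjecture. To actually attack Conjecture~\ref{conj} you would need, at minimum, an $N$-player analogue of the concavity Lemma~\ref{concavity_lemma} to get existence, a replacement for the one-variable reduction $f(y^{*})=0$ to control $\sum x_{i,j}^{*}$, and some new idea for uniqueness, since the two-player argument hinges on reducing to a single quartic whose roots can be located by hand.
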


\section{Conclusion}

We explore the pool block withholding attacks and make an in-depth game theoretic analysis. Compared to previous models for miner's dilemma, our model is more general with the introduction of betrayal assumption. Focusing on the pure Nash equilibrium and the pure price of anarchy, we prove that every two-player miner’s dilemma game with betrayal assumption admits a unique pure Nash equilibrium and the tight bound of $\mathrm{PPoA}$ is $(1, 2]$. We also give the explicit expression of Nash equilibrium in special cases.

Proving Conjecture \ref{conj}, i.e. establishing uniqueness of pure Nash equilibrium and proving an upper bound of 2 on $\mathrm{PPoA}$ in N-pool case ($N\geq 3$), is one of our future works. The analysis of mixed Nash equilibria is also an interesting research direction. 






\newpage

\bibliographystyle{ACM-Reference-Format} 
\bibliography{ref}






\end{document}